\newtheorem{thm}{Theorem}
\newtheorem{lemma}{Lemma}
\newtheorem{definition}{Definition}
\title{Practical Near Neighbor Search via Group Testing}
\author{ %
  Joshua Engels\thanks{Equal contribution.}\\
  Department of Computer Science\\
  Rice University\\
  Houston, Texas, USA \\
  \texttt{jae4@rice.edu} \\
  \And
  Benjamin Coleman\textsuperscript{*}\\
  Electrical and Computer Engineering\\
  Rice University\\
  Houston, Texas, USA \\
  \texttt{ben.coleman@rice.edu} \\
  \And 
  Anshumali Shrivastava\\
  Department of Computer Science\\
  Rice University\\
  Houston, Texas, USA \\
  \texttt{anshumali@rice.edu} \\
}
\begin{document}

\maketitle

\begin{abstract}
  We present a new algorithm for the approximate near neighbor problem that combines classical ideas from group testing with locality-sensitive hashing (LSH). We reduce the near neighbor search problem to a group testing problem by designating neighbors as ``positives,'' non-neighbors as ``negatives,'' and approximate membership queries as group tests. We instantiate this framework using distance-sensitive Bloom Filters to Identify Near-Neighbor Groups (FLINNG). We prove that FLINNG has sub-linear query time and show that our algorithm comes with a variety of practical advantages. For example, FLINNG can be constructed in a single pass through the data, consists entirely of efficient integer operations, and does not require any distance computations. We conduct large-scale experiments on high-dimensional search tasks such as genome search, URL similarity search, and embedding search over the massive YFCC100M dataset. In our comparison with leading algorithms such as HNSW and FAISS, we find that FLINNG can provide up to a 10x query speedup with substantially smaller indexing time and memory.
\end{abstract}

\section{Introduction}

Nearest neighbor search is a fundamental problem with many applications in machine learning systems. Informally, the task is as follows. Given a dataset $D = \{x_1, x_2, ... x_N\}$, we wish to build a data structure that can be queried with any point $q$ to obtain a small set of points $x_i \in D$ that have high similarity (low distance) to the query. This structure is called an \textit{index}. Near neighbor indices form the backbone of production models in recommendation systems, social networks, genomics, computer vision and many other application domains. 

% \footnote{By ``sparse'' we mean that each data point is close to a limited number of other points, not that each point has a limited number of nonzeros. The vectors themselves may be either dense or sparse.}

\textbf{Applications:} In this paper, we focus on algorithms for approximate near neighbor search over high-dimensional large scale datasets. Such tasks frequently arise in genomics, web-scale data mining, machine learning, and other large-scale applications. Consider the Yahoo Flickr Creative Commons dataset (YFCC100M) which consists of 100 million media embeddings that are derived from the neuron activations for a convolutional neural network~\cite{thomee2016yfcc100m}. Each embedding is a 4096-dimensional vector. The dataset is about 1TB in size and presents a substantial challenge for even the most popular algorithms, which struggle with memory, index construction, and query time. Similar issues occur in genomics, where the task is to identify genome sequences with a high Jaccard similarity to the query. Modern genomic datasets can contain millions of reads with billions of possible $n$-gram sequences~\cite{ondov2016mash}. Many algorithms work well when there are a few hundred dimensions but are inappropriate for such applications. Our experiments demonstrate that for the datasets of interest in this paper, popular indices like HNSW and FAISS can take days to build, require gigabytes of RAM and have a suboptimal precision-recall-query time tradeoff.

Since our goal is to perform approximate search, dimensionality reduction is a reasonable strategy. However, dimensionality reduction is costly for ultra-high dimensional data. In genomics applications, $n$-gram sizes are typically very large $(n > 18)$. Thus, the one-hot encoding of each sequence can require billions of dimensions $(4^{18} \approx 68\text{B})$, making it intractable to learn an embedding model. For embedding applications such as YFCC100M or product search, a large embedding dimension can lead to performance improvements 
% when the input domain is diverse and there are millions of objects to embed
~\cite{medini2021solar}. Dimensionality reduction can incur a performance penalty, so we may wish to perform the near neighbor search over the original metric space.

Ideally, we would choose an algorithm that did not store data points in RAM, evaluate the distance function many times, employ iterative processes such as $k$-means, or construct complicated structures such as graphs, which are hard to parallelize and distribute. Recent algorithms such as FLASH~\cite{wang2018randomized} provide the ability to search based on aggregate LSH count statistics without computing distances, but these methods are heuristics that do not have theoretical guarantees. On the other hand, algorithms such as LSH, which have a well-established theoretical grounding, tend to perform poorly in practice because of their prohibitive hash table size and post filtering stage, which needs many ($N^\rho$ in theory) distance computations. In this paper, we present an algorithm having all the practical advantages of a system like FLASH while also being more accurate, theoretically sound, and provably sub-linear.
% Discuss limitations: do not want to store the data in RAM, do not want to compute distances, do not want to
% FLASH is an ad-hoc algorithm without theoretical guarantees, while LSH is theoretically sound but has major practical issues
% In this paper, we present an algorithm with all the desirable practical properties of FLASH that also
% The idea of group testing allows us to develop an algorithm with these properties while allowing us to

\subsection{Our Contribution}
In this paper, we address the computational challenges of high-dimensional similarity search by presenting an index with fast construction time, low memory requirement, and zero query-time distance computations.
%  These practical advantages are enabled by ideas from classical group testing. a set of algorithms to identify ``positive'' items from a large collection of ``negative'' items. by testing groups of items.
Our approach is to transform a near neighbor search problem into a group testing problem by designing a test that outputs ``positive'' when a group of points contains a near neighbor to a query. That is, each test answers an approximate membership query over its group. Given a query, our algorithm produces a $B\times R$ array of group test results that can be efficiently decoded to identify the nearest neighbors. This is more efficient than statistical aggregation algorithms like FLASH because each test filters out entire groups of non-neighbors with a single test operation.

We develop a concrete example of such an algorithm by using Filters to Identify Near Neighbor Groups (FLINNG). We use a standard non-adaptive group testing design with distance-sensitive Bloom filters as tests. We prove that FLINNG solves the randomized nearest neighbor problem in $O\left(\log^2(\frac{1}{\delta})\log^3(N)N^{\frac{1}{2} + \gamma}\right)$ time, where $\gamma$ is a query-dependent parameter that characterizes query stability. We also implement FLINNG in C++ and conduct experiments on real-world high-dimensional datasets from genomics, embedding search, and URL analysis, where FLINNG achieves up to a 10x query speedup over existing indices with faster construction time and lower memory.

% Explain why group testing is savings over traditional LSH and other probe-based methods - don't have to examine all of the points
% LSH is an oracle that tests (FLASH is group testing of size = 1). Intersection law causes the search space to decay exponentially with tests.
% no distance computations, no need to store data explicitly in RAM, streaming construction (no centroid training like FAISS or graph construction), just integer operations. Plus neat proof of sublinear time with interesting techniques.

\section{Related Work}
The near neighbor problem has been the focus of more than four decades of intense research activity. The low dimensional problem is particularly well-understood, with space partitioning trees that can efficiently find the exact $k$-nearest neighbors. However, exact search in high dimensions is intractable due to the curse of dimensionality - the computational resources needed to solve the exact problem scale exponentially with dimensions. This has led to a diverse set of algorithms to solve the \textit{approximate near neighbor problem}, which we now describe.

\textbf{Locality Sensitive Hashing:} LSH was the first approximate near neighbour algorithm to break the curse of dimensionality. At their core, LSH algorithms use an \textit{LSH function} to partition the dataset into buckets. The hash function is selected so that the distance between points in the same bucket is likely to be small. To find the near neighbors of a query, we hash the query and compute the distance to every point in the corresponding bucket. Query performance can be improved with \textit{replication}, which queries multiple independent hash tables, \textit{multi-probe methods}~\cite{lv2007multi}, which examine multiple buckets in the hash table, and \textit{data-dependent LSH}~\cite{andoni2015optimal}, which tunes the hash function to the dataset. 
% and many other extensions. 
Recent work shows that machine learning algorithms can also construct effective LSH partitions~\cite{dong2019learning}.

\textbf{Count-Based LSH:} There are several recent algorithms which identify neighbors by counting the number of LSH collisions rather than explicitly computing distances. For example, the algorithm from~\cite{wang2018randomized} uses the count values to quickly identify potential neighbors. The algorithm from~\cite{coleman2020sub} applies compressed sensing techniques to the counts to compress the dataset, and a popular technique in genomics is to simply \textit{replace} each data point with its hash values~\cite{ondov2016mash}.

\textbf{Graphs:} Graph-based methods are another successful family of algorithms. Graph algorithms locate near neighbors by walking the edges of a graph where each point is (approximately) connected to its $k$ nearest neighbors. The focus in this area has been to improve graph properties using diversification, pruning, hierarchical structures, and other heuristics~\cite{malkov2018efficient}. Graph indices perform well on industry-standard benchmarks but are not theoretically well-understood, despite recent progress~\cite{prokhorenkova2020graph}. Graph indices also suffer from long construction times and bloated memory consumption. 

% \textbf{Trees:} Tree-based algorithms for high dimensional search exist, but are of limited practical utility. Structures such as cover trees can solve the \textit{exact} search problem in $O(\mathrm{poly}(c)\log N)$ time, where $c$ is a parameter such as the \textit{doubling dimension} or \textit{expansion constant} that depends on the intrinsic dimensionality of the data (cite Covertrees, improved covertrees). Exact search algorithms such as the KD-tree can also be adapted to work for the approximate search problem in high dimensions (cite revisiting KD-tree KDD paper) However, tree algorithms are infeasible for most high-dimensional search tasks. While one can achieve good performance for specific applications (cite alex's Canopy paper), in practice trees are limited by the curse of dimensionality and outperformed by other methods.

\textbf{Sample Compression:} A large number of practical methods are based on quantization. Such methods replace points in the dataset with \textit{compressed versions} of the points. Methods such as scalar quantization, vector quantization and product quantization alias each point to a collection of $k$-means centroids. One can also use machine learning to obtain learned Hamming codes for the dataset and perform efficient distance computations using bit operations. Advances in quantization are applicable to most other algorithms, but have been particularly effective when combined with brute force search on GPU hardware and partition-based search over billion-scale datasets~\cite{JDH17}.

% Quantization methods compress points in the input dataset using methods like product quantization, vector quantisization, and scalar quantization, potentially losing some accuracy but allowing the entire dataset to be stored in far less space. These algorithms quantize queries in the same way and then compare them against each of the quantized points in the dataset; since the vectors are much smaller each exact distance computation is much faster. These methods can also be hierchical; the data may be partitioned into different clusters and some of the clusters chosen by any near neighbor search algorithm. Only the points in those clusters need to be searched. Quantisization methods are very good at searching large datasets, but are frequently beat by other methods on smaller scales. 

\textbf{Group Testing:} We are not the first algorithm to apply group testing to near neighbor search. However, existing algorithms have key limitations that prevent effective practical implementations and rigorous theoretical analysis. The authors of~\cite{iscen2017memory} propose a group-based filtering algorithm based on \textit{group representative vectors}, or the vector average of group entries. To query the index,~\cite{iscen2017memory} explicitly compute the distances to all points where the distance between the representative and query exceeds a threshold. The algorithm of~\cite{shi2014group} uses the same group representatives, but applies an online backpropagation algorithm to estimate the individual similarities

This work has two shortcomings. First, the methods require many distance calculations against the group representatives (\cite{iscen2017memory} requires $N/10$ distances), resulting in poor query time. 
Second, the average vector can be similar to the query even when all points are far from the query, precluding a theoretical analysis except under restrictive distribution assumptions.
In this work, we analyze methods where \textit{only the group tests} are used to identify the neighbors, as our goal is to avoid performing expensive distance computations. Unsurprisingly, our method is theoretically and practically superior.

% Finally, a large number of practical methods are based on quantization. Such methods replace points in the dataset with \textit{compressed versions} of the points. Methods such as scalar quantization, vector quantization and product quantization alias each point to a collection of $k$-means centroids. One can also use machine learning to obtain learned Hamming codes for the dataset and perform efficient distance computations using bit operations. Advances in quantization are applicable to most other algorithms, but have been particularly effective when combined with brute force search on GPU hardware and partition-based search over billion-scale datasets~\cite{JDH17}.

\section{Background}

\textbf{Formal Problem Statement: } In this paper, we solve the randomized \textit{nearest} neighbor problem. Definition~\ref{def:prob_statement} is a stronger version of the well-studied $(R,c)$-approximate \textit{near} neighbor problem. In particular, any algorithm which solves the randomized nearest neighbor problem also solves the approximate near neighbor problem with $c = 1$ and any $R\geq $ the distance to the nearest neighbor.
\begin{definition}
\label{def:prob_statement}
\textbf{Randomized Nearest neighbor:} Given a dataset $D$, a distance metric $d(\cdot,\cdot)$ and a failure probability $\delta \in [0,1]$, construct a data structure which, given a query point $y$, reports the point $x\in D$ with the smallest distance $d(x,y)$ with probability greater than $1 - \delta$.
\end{definition}

\subsection{Group Testing}

%  The problem is \textit{noiseless} if the tests always output the right answer. Otherwise, we must solve the \textit{noisy} group testing problem, where the tests

Suppose we are given a set $D$ of $N$ items, $k$ of which are positive (``hits'') and $N - k$ of which are negative (``misses''). The group testing problem is to identify the hits by grouping items and using a small collection of \textit{group tests}. A group test is positive if and only if any item from the group is positive. The objective of group testing is to reliably identify the positive items using fewer than $N$ group tests. The problem is \textit{noisy} if the tests make i.i.d. mistakes with some false positive and false negative rate. The group testing problem may also be \textit{adaptive}, where we are allowed to design test $n$ based on the results of tests $\{1,2,...n-1\}$, or \textit{non-adaptive}, where we must perform all tests at once.

Since the problem's introduction in 1943, there has been considerable work toward the construction of test designs under various constraints. For a recent review, see~\cite{aldridge2019group}. In this paper we develop near neighbor search algorithms using the noisy group testing framework. For simplicity, we mainly consider the \textit{doubly regular design}, where we evenly distribute items among $B$ tests, and we independently repeat this process $R$ times to obtain a $B\times R$ grid of group tests (Figure~\ref{fig:intuition}). However, our algorithmic framework is compatible with any non-adaptive design.

% For simplicity, we mainly consider the \textit{constant-tests-per-item} design. In this design, we assign items i.i.d. to one of $B$ tests, and we independently repeat this process $R$ times to obtain a $B\times R$ grid of group tests. (REF ALGO FIGURE) However, our algorithmic framework is compatible with any non-adaptive design.
% 

% \textit{NOTE: It is unlikely that the theoretical guarantees can be improved using other test designs. I would be surprised if a different test design is asymptotically better, though constant factors might improve.}

\subsection{Locality-Sensitive Hashing}
A hash function $h(x) \mapsto \{1,...,R\}$ is a function that maps an input $x$ to an integer in the range $[1,R]$. An LSH family $\mathcal{H}$ is a set of hash functions with the following property: Under the hash mapping, nearby points have a high probability of having the same hash value. The two points $x$ and $y$ are said to \textit{collide} if $h(x) = h(y)$. We will use the notation $s(x,y)$ to refer to the collision probability $\text{Pr}_{\mathcal{H}}[h(x) = h(y)]$. The original definition of LSH given by~\cite{indyk1998approximate} establishes lower bounds on $s(x,y)$ when $d(x,y)$ is small (i.e. we want a high probability that $x$ and $y$ collide) and upper bounds when $d(x,y)$ is large (i.e. we do not want $x$ and $y$ to collide). For our analysis, we will assume a slightly different notion of LSH. Specifically, we suppose that $s(x,y)$ is exactly equal to the similarity between $x$ and $y$. That is, $s(x,y) = \mathrm{sim}(x,y)$. The vast majority of LSH functions in the literature satisfy this property - see~\cite{gionis1999similarity} for a review.

We also introduce the concatenation trick. For any positive integer $L$, we may transform an LSH family $\mathcal{H}$ with collision probability $s(x,y)$ into a new family having $s(x,y)^L$ by sampling $L$ hash functions from $\mathcal{H}$ and concatenating the values to obtain a new hash code $[h_1(x),h_2(x),...,h_L(x)]$. If the original hash family had the range $[1,R]$, the new hash family has the range $[1, R^L]$. 

% LSH \cite{indyk1998approximate} is a popular technique for approximate nearest-neighbor search. An LSH family is a group of hash functions, each of which have a high probability of mapping similar points to the same hash value. We say that a collision occurs whenever the hash values for two points are equal, i.e. $h(x) = h(y)$. The probability $Pr_H[h(x) = h(y)]$ is known as the collision probability of $x$ and $y$. For our arguments, we will assume a slightly stronger notion of LSH than the one given by \cite{indyk1998approximate}. We will suppose that the collision probability of $x$ and $y$ is exactly equal to the similarity
% between $x$ and $x$. That is
% \begin{align}
%     Pr_H[h(x) = h(y)] = sim(x, y)
% \end{align}
% where sim(x, y) is a similarity function. LSH is a very well-studied topic with a
% number of well-known LSH families in the literature \cite{gionis1999similarity}. Most LSH families satisfy this assumption.

\subsection{Distance-Sensitive Bloom Filters}
The distance-sensitive Bloom filter \cite{kirsch2006distance} is a data structure which solves the \textit{approximate set membership problem}.
\begin{definition}
\label{def:approx_set_member}
\textbf{Approximate Set Membership:} Given a set $D$ of $N$ points and similarity thresholds $S_L$ and $S_H$, construct a data structure which, given a query point $y$, has:\\ 
\noindent\textbf{True Positive Rate:} If there is $x\in D$ with $\mathrm{sim}(x,y) > S_H$, the structure returns true w.p. $\geq p$ \\
\noindent\textbf{False Positive Rate:} If there is no $x \in D$ with $\mathrm{sim}(x,y) > S_L$, the structure returns true w.p. $\leq q$
% \begin{enumerate}
%     \item \textbf{True Positive Rate:} If there is a point $x\in D$ with $\mathrm{sim}(x,y) > S_H$, the structure returns true with probability $\geq p$
%     \item \textbf{False Positive Rate:} If there is no point $x \in D$ with $\mathrm{sim}(x,y) > S_L$, the structure returns true with probability $\leq q$
% \end{enumerate}
\end{definition}

The distance-sensitive Bloom filter solves this problem using LSH functions and a 2D bit array. The structure consists of $m$ binary arrays that are each indexed by an LSH function. There are three parameters: the number of arrays $m$, a positive threshold $t \leq m$, and the number of concatenated hash functions $L$ used within each array. The length of each array is set to be the range of the LSH family and is therefore not a parameter.

To construct the filter, we insert elements $x \in D$ by setting the bit located at array index $[m,h_m(x)]$ to 1. To query the filter, we determine the $m$ hash values of the query $y$. If at least $t$ of the corresponding bits are set, we return true. Otherwise, we return false. For our group testing analysis, we need explicit bounds on the error rates $p$ and $q$. We obtain these bounds using a straightforward extension of Proposition 2.1 from~\cite{kirsch2006distance} and provide a proof in the supplementary materials.

\begin{thm}
\label{thm:distance_sensitive_Bloom}
Assuming the existence of an LSH family with collision probability $s(x,y) = \mathrm{sim}(x,y)$, the distance-sensitive Bloom filter solves the approximate membership query problem with
\begin{equation}
    p \geq 1 - \mathrm{exp}\left(-2m\left(-t + S_H^L\right)^{2}\right) \qquad
    q \leq \mathrm{exp}\left(-2m\left(-t + N S_L^L\right)^{2}\right)
\end{equation}
\end{thm}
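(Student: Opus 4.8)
The plan is to write the query outcome as a threshold on a sum of $m$ independent Bernoulli random variables — one per array of the filter — and apply a Hoeffding bound to each tail. Fix a query $y$. For array $j \in \{1,\dots,m\}$ let $X_j$ be the indicator that the queried bit $[j,h_j(y)]$ equals $1$. Since the $m$ arrays are indexed by independently sampled (and independently concatenated) LSH functions, the $X_j$ are mutually independent, and the filter returns \emph{true} exactly when $\sum_{j=1}^m X_j \geq t$. The argument then has three parts: (i) bound the marginal $\Pr[X_j=1]$ in each of the two regimes of Definition~\ref{def:approx_set_member}; (ii) pass to an i.i.d.\ comparison sequence by stochastic dominance; (iii) invoke Hoeffding.

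For the true positive rate, condition on some $x^\star \in D$ with $\mathrm{sim}(x^\star,y) > S_H$. The bit $[j,h_j(y)]$ is set whenever $h_j(x^\star) = h_j(y)$ — collisions with other inserted points only help — so $\Pr[X_j=1] \geq s(x^\star,y)^L = \mathrm{sim}(x^\star,y)^L > S_H^L$. Hence $\sum_j X_j$ stochastically dominates $\sum_j X_j'$ with the $X_j'$ i.i.d.\ $\mathrm{Bernoulli}(S_H^L)$, and Hoeffding's inequality on the lower tail of $\sum_j X_j'$ (valid when $t < S_H^L$, reading $t$ as the threshold expressed as a fraction of $m$) gives $\Pr[\sum_j X_j \geq t] \geq 1 - \exp(-2m(S_H^L - t)^2)$, the claimed bound on $p$.

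For the false positive rate, condition on the event that no $x \in D$ has $\mathrm{sim}(x,y) > S_L$, so every data point satisfies $\mathrm{sim}(x,y) \leq S_L$. Then $X_j = 1$ iff $h_j(x) = h_j(y)$ for at least one $x \in D$, and a union bound gives $\Pr[X_j=1] \leq \sum_{x\in D} s(x,y)^L \leq N S_L^L$. Thus $\sum_j X_j$ is stochastically dominated by a sum of $m$ i.i.d.\ $\mathrm{Bernoulli}(N S_L^L)$ variables, and Hoeffding on the upper tail (valid when $t > N S_L^L$) yields $\Pr[\sum_j X_j \geq t] \leq \exp(-2m(N S_L^L - t)^2)$, the claimed bound on $q$.

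The obstacles here are bookkeeping rather than conceptual. First, the bounds are only meaningful in the regime $N S_L^L < t < S_H^L$ (with $t$ a fraction of $m$); outside it one Hoeffding application is vacuous and the stated bound holds trivially, which should be noted. Second, within a single array the bit value depends on collisions among many data points at once, so the $X_j$ are \emph{not} Bernoulli with a tidy success probability; this is precisely why I bound only the marginals and then use stochastic dominance against a clean i.i.d.\ sequence rather than computing the law of $\sum_j X_j$. I expect the dominance step to be the part most worth writing carefully, since it is what turns the one-sided marginal bounds into one-sided tail bounds — after that everything is a direct substitution into Hoeffding, which is presumably why the authors call this a straightforward extension of Proposition~2.1 of~\cite{kirsch2006distance} (the new ingredients being the explicit finite-sample constants, the presence of $N$ data points, and the concatenation parameter $L$).
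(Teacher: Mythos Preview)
Your proposal is correct and follows essentially the same route as the paper's proof: bound the per-array hit probability below by the collision probability with a single close point and above by a union bound over all $N$ points, then apply Hoeffding to the sum over the $m$ independent arrays. The only cosmetic difference is that the paper observes the $X_j$ are genuinely i.i.d.\ Bernoulli with a well-defined (if not explicitly computable) parameter $p_y$, so the sum is exactly Binomial and one can bound the tail via monotonicity in $p_y$ rather than invoking stochastic dominance; your dominance argument is equivalent but slightly heavier machinery than needed here.
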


% are data structures that consist of $m$ arrays of length $2^b$, each associated with a different hash function $h$ from a single hash family. They seek to answer a slightly different question from traditional Bloom filters: instead of determining if a point is in a set, they determine if a point is \textit{similar} to some point in a set. To initialize a distance-sensitive Bloom filter, we hash each of the points in the set with the $m$ hash functions and take the results modulo $2^b$, then set the corresponding indices in each of the $m$ arrays to $1$. To query with a point $x$, we first determine the $m$ hash values of $x$ and take the result of each modulo $2^b$. If we look in these $m$ locations and there are more $1$s than a certain threshold $t$, we return $True$, and otherwise we return $False$. We note that unlike normal Bloom filters, distance-sensitive Bloom filters have a non zero false negative rate (in addition to a non zero false positive rate).
\section{Algorithm}
We will now describe our algorithm for high-dimensional near neighbor search. We begin by reducing the near neighbor search problem to a group testing problem. Suppose we are given an $N$-point dataset $D$ and are asked to return points which are similar to a query $y$. If we apply a similarity threshold to the dataset, we obtain a near neighbor set $K = \{x \in D | \mathrm{sim}(x,y) \geq S\}$. We consider $K$ to be the set of ``positives'' in the group testing problem. We can solve the similarity search problem by finding the $|K|$ positives among the $N - |K|$ negatives using group testing.

In order to do so, we split the dataset $D$ into a set of groups, which we visualize as a $B\times R$ grid of cells. Each cell has a group of items $M_{r,b}$ and a corresponding group test $C_{r,b}$. To assign items to cells, we evenly distribute the $N$ points among the $B$ cells in each column of the grid, and we independently repeat this assignment process $R$ times.

To identify groups that contain positives, we need a testing procedure that outputs ``true'' when the group contains a point similar to $y$ and ``false'' otherwise. That is, we require a binary classifier $C_{r,b}$ that solves the approximate membership testing problem for $M_{r,b}$ (Definition~\ref{def:approx_set_member}). For group testing to be effective, the true positive rate $p$ and false positive rate $q$ of the classifier $C_{r,b}$ should be good enough to reliably identify positive and negative cells, respectively.

Algorithm~\ref{alg:index} shows how to construct the index. We begin by randomly distributing the points across the $B$ cells in each row, so that each cell has the same number of points. This can be done by randomly permuting the elements of $D$ and assigning blocks of $\frac{N}{B}$ elements to each cell using modulo hashing. Then, we construct classifiers (group tests) to solve the approximate membership problem in each cell.

To query the index with a point $y$, we begin by querying each classifier. If $C_{r,b}(y) = 1$, then at least one of the points in $M_{r,b}$ has high similarity to $y$. We collect all of these ``candidate points'' by taking the union of the $M_{r,b}$ sets for which $C_{r,b}(y) = 1$. We repeat this process for each of the $R$ repetitions to obtain $R$ candidate sets, one for each column in the grid. With high probability, each candidate set contains the true neighbors, but it may also have some non-neighbors that were included in $M_{r,b}$ by chance. To filter out these points, we intersect the candidate sets to obtain our approximate near neighbor set $\hat{K}$. Algorithm~\ref{alg:query} explains this process in greater detail.

\begin{minipage}{0.46\textwidth}
\begin{algorithm}[H]
\begin{algorithmic}
   
   \STATE {\bfseries Input:} Dataset $D$ of size $N$, positive integers $B$ and $R$, similarity threshold $S$
   \STATE {\bfseries Output:} A FLINNG search index consisting of membership sets $M_{r,b}$ and group tests $C_{r,b}$
  \FOR{$r=0$ {\bfseries to} $R-1$}
        \STATE Let $\pi(D)$ be a random permutation of $D$
        \STATE Define $M_{r,b} = \{\pi(D)_i \mid i \mod B = b\}$
    \ENDFOR
    \FOR{$r=0$ {\bfseries to} $R-1$}
        \FOR{$b = 0$ {\bfseries to} $B - 1$}
            \STATE Construct a classifier $C_{r,b}$ for membership set $M_{r,b}$ with true positive rate $p$ and false positive rate $q$
        \ENDFOR
    \ENDFOR
\end{algorithmic}

\caption{Index Construction}
   \label{alg:index}
\end{algorithm}
\end{minipage}
\begin{minipage}{0.46\textwidth}
\begin{algorithm}[H]
\begin{algorithmic}
   \STATE {\bfseries Input:} A FLINNG index and a query $y$
   \STATE {\bfseries Output:} Approximate set $\hat{K}$ of neighbors with similarity greater than the threshold $S$
    \STATE $\hat{K} = \{1, \ldots, N\}$
    \FOR{$r=0$ {\bfseries to} $R-1$}
        \STATE $Y = \emptyset$
        \FOR{$b=0$ {\bfseries to} $B-1$}
            \IF{$C_{r,b}(y) = 1$}
                \STATE $Y = Y \cup M_{r,b}$
            \ENDIF
        \ENDFOR
        \STATE $\hat{K} = \hat{K} \cap Y$
    \ENDFOR
\end{algorithmic}

\caption{Index Query}
   \label{alg:query}
\end{algorithm}
\vspace{1.0cm}
\end{minipage}

\begin{figure*}[t]
\begin{center}
\centerline{\includegraphics[height=1.7in]{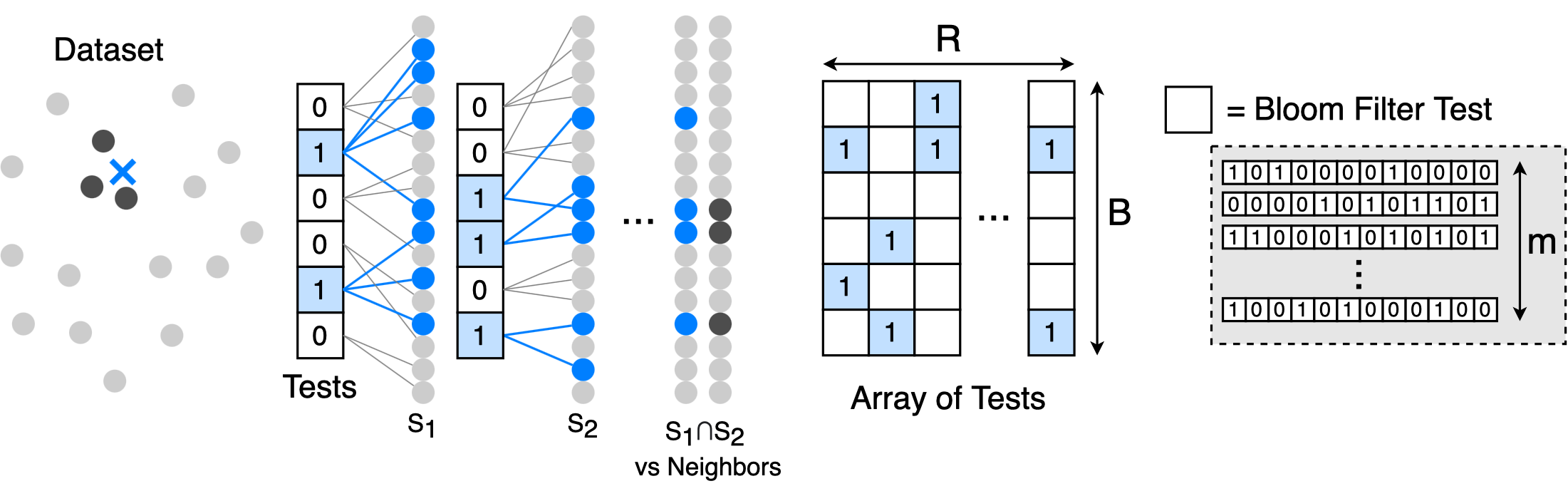}}
\vspace{-0.3cm}
\caption{Intuition behind our approach. We mark the neighbors (black dots) of a query (blue X) as ``positives'' for group testing and construct a $B\times R$ array of tests. Based on the test results, we obtain sets $S_1,S_2,...S_R$ of candidates, which we intersect to identify the neighbors. In general, the tests can be any classifier that detects neighbors. In this paper, we use distance-sensitive Bloom filters. }
\label{fig:intuition}
\end{center}
\vspace{-1cm}
\end{figure*}

% \subsection{Intuition}

\textbf{Intuition:} In each repetition, we partition $N$ points into $B$ groups, where $B \ll N$. To understand why this strategy leads to good performance, suppose we are only interested in finding the nearest neighbor $x_{\text{NN}}$ (i.e. $|K| = 1$). If the tests have a high true positive rate $p$, then the $R$ cells that contain $x_{\text{NN}}$ will have $C_{r,b}(y) = 1$. If the tests have a low false positive rate $q$, then the cells that do not contain $x_{\text{NN}}$ will have $C_{r,b}(y) = 0$ with high probability. 

In the first repetition, our tests identify the group $M_{0,b}$ that contains $x_{\text{NN}}$ - each point in $M_{0,b}$ is a near-neighbor candidate. Thus, with only $B$ calls to the classifier $C_{r,b}(y)$, we have reduced the number of candidates from $N$ to $\frac{N}{B}$. If we repeat this process, we find another candidate set $M_{1,b}$. Our overall set of candidates is now the intersection $M_{0,b}\cap M_{1,b}$, whose expected size is $\frac{N}{B^2}$. In general, each repetition reduces the number of candidates by a factor of $\frac{1}{B}$, which decreases \textit{exponentially} with the number of repetitions. We progressively rule out more and more candidates until we are left with only the near neighbors.

In practice, this process is efficient because we can construct tests with a reasonable $p$ and $q$ that are very fast to query. For example, when $C_{r,b}$ is a distance-sensitive Bloom filter, the testing process can be implemented using constant memory with bit operations or efficient integer lookup tables. The set union and intersection operations can also be implemented using cheap integer operations. The result is an algorithm that identifies the near neighbors using group testing, without explicitly storing the data or performing any distance computations.

\section{Theory}
\label{sec:theory}
% TODO
% Do we really nead ammortized?
% Ben: You can prove a statement about w.p. 1 - delta the query time is less than O(whatever), but this is ugly and requires you to compute the variance which I didnt feel like doing. But you can. Expected query time is probably sufficient though if this we submit to a hardcore theory venue it may not be
% Refer to background material when setting up the hash cells
% Move query specific sparsity measure somewhere else?

% Ben: This is a proof sketch, i wasn't sure where to put it so I stuck it here
% bernoulli trick from here: http://www.lkozma.net/inequalities_cheat_sheet/ineq.pdf

\textit{Proof Sketch:}
We defer full proofs to the supplementary materials; what follows is a high-level description of our theory. To obtain theoretical guarantees, we first assume that the tests have a universal fixed false positive and false negative rate. Under this assumption, we derive bounds on the query time and error rates of the FLINNG algorithm. To satisfy this assumption, we show how to construct distance-sensitive Bloom filters with a given error rate. Here, the main technical difficulty is to bound the query time necessary to achieve the correct testing error rate. To do this, we require a data-dependent sparsity measure $\gamma$ that is small when the query has only a few close neighbors. To prove our main theorem, we set the test error rates so that the group design solves the randomized nearest neighbor problem.

% We first examine the accuracy of Algorithm \ref{alg:query}: how good the FLINNG index is at returning the near neighbors of a query $x$.
\subsection{Group Testing: Runtime and Accuracy}
We first derive bounds on the error rates of our index, under assumptions about the tests.
% We first derive bounds on the error rates of our group testing design, under assumptions about the tests.
\begin{lemma}
% Given a dataset $D$ and a $(B,R,S)$-FLINNG Index with cellwise true positive rate $p$ and cellwise false positive rate $q$, and a query $x$, then if $y \in K$, the probability that $y \in X$ at the end of Algorithm \ref{alg:query} is greater than
\label{lem:overall_tpr_fpr}
  Suppose we have a dataset $D$ of points, where a subset $K\subseteq D$ is ``positive'' and the rest are ``negative.'' Construct a $B\times R$ grid of tests, where each test has i.i.d. false positive rate $p$ and false negative rate $q$. Then Algorithm~\ref{alg:query} reports points as ``positive'' with probability:
  \begin{align}
      \mathrm{Pr}[\mathrm{Report}\, x | x \in K] \geq p^R
  \end{align}
% \vspace{-0.8cm}
    \begin{equation}
      \mathrm{Pr}[\mathrm{Report}\, x | x \not \in K] \leq \Bigg[q\left(\frac{eN (B-1)}{B(N-1)}\right)^{|K|} + p\left(1 - \left(\frac{N(B-1)}{eB(N-1)}\right)^{|K|}\right)\Bigg]^R
  \end{equation}
\end{lemma}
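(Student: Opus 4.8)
\textit{Proof plan.} My plan is to reduce both inequalities to a statement about a single point in a single repetition and then combine the $R$ repetitions using independence. Two structural observations do most of the work. First, the $R$ repetitions are built from fresh random permutations and fresh test randomness, so the events ``$x$ survives repetition $r$'' (i.e.\ $x$ is still in the running intersection $\hat K$ after the $r$-th pass) are mutually independent over $r$. Second, in a single repetition the doubly-regular design places $x$ into exactly one cell $M_{r,b(r)}$, and tracing Algorithm~\ref{alg:query} shows that $x$ survives that repetition if and only if the single test $C_{r,b(r)}$ fires. Hence $\mathrm{Pr}[\mathrm{Report}\,x]=\prod_{r=0}^{R-1}\mathrm{Pr}[C_{r,b(r)}(y)=1]$, and it suffices to bound one factor and raise it to the $R$-th power.

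For the positive case, if $x\in K$ then $x$ itself is a positive inside its cell in every repetition, so the true-positive assumption gives $\mathrm{Pr}[C_{r,b(r)}(y)=1]\ge p$ for each $r$, and multiplying the $R$ independent factors yields $\mathrm{Pr}[\mathrm{Report}\,x\mid x\in K]\ge p^R$.

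For the negative case, fix $x\notin K$ and a repetition $r$, and let $E_r$ be the event that $x$'s cell also contains at least one element of $K$. Conditioning on $E_r$ versus its complement and using the test's error rates (firing probability $\le p$ when a positive is present, $\le q$ when none is), I get $\mathrm{Pr}[C_{r,b(r)}(y)=1]\le p\,\mathrm{Pr}[E_r]+q\,\mathrm{Pr}[\overline{E_r}]$. The remaining ingredient is a two-sided estimate of $\mathrm{Pr}[\overline{E_r}]$. Since $x$'s cell is a uniformly random size-$\tfrac{N}{B}$ subset of $D$ that contains $x$, the probability that all $|K|$ positives avoid it is the exact product $\mathrm{Pr}[\overline{E_r}]=\prod_{i=0}^{|K|-1}\tfrac{N-N/B-i}{N-1-i}$. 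These factors are non-increasing in $i$ (their derivative in $i$ is proportional to $1-N/B\le 0$), so bounding every factor by the $i=0$ term gives $\mathrm{Pr}[\overline{E_r}]\le\big(\tfrac{N(B-1)}{B(N-1)}\big)^{|K|}$, while a short calculation (each factor is at least $\tfrac{1}{e}\cdot\tfrac{N(B-1)}{B(N-1)}$, or equivalently $\log(1-t)\ge -t/(1-t)$ applied term by term) gives $\mathrm{Pr}[\overline{E_r}]\ge\big(\tfrac{N(B-1)}{eB(N-1)}\big)^{|K|}$. Writing $r_0=\tfrac{N(B-1)}{B(N-1)}$ and plugging $\mathrm{Pr}[\overline{E_r}]\le(er_0)^{|K|}$ into the $q$-term and $\mathrm{Pr}[E_r]=1-\mathrm{Pr}[\overline{E_r}]\le 1-(r_0/e)^{|K|}$ into the $p$-term bounds one factor by $q\big(\tfrac{eN(B-1)}{B(N-1)}\big)^{|K|}+p\big(1-\big(\tfrac{N(B-1)}{eB(N-1)}\big)^{|K|}\big)$, and raising to the $R$-th power via independence finishes the argument.

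The step I expect to be delicate is the two-sided estimate of $\mathrm{Pr}[\overline{E_r}]$, and in particular making the factors of $e$ fall out of the exact hypergeometric product. The lower bound $\mathrm{Pr}[\overline{E_r}]\ge(r_0/e)^{|K|}$ quietly requires $|K|$ to stay bounded away from $N-N/B$ (otherwise $\mathrm{Pr}[\overline{E_r}]$ can be $0$ and the bound is false as stated), so I would either add a harmless side hypothesis such as $|K|\le\tfrac12(N-N/B)$ — innocuous for nearest-neighbor search, where $|K|$ is tiny — or carry a correction term. I would also want to double-check the direction of the ``$\le p$'' bound used on $E_r$ (it is the right estimate precisely when a test fires with probability essentially $p$ whenever a positive is present), and to spell out carefully that Algorithm~\ref{alg:query}'s sequence of unions and intersections is exactly equivalent to ``$x$ is reported iff $x$'s cell fires in every one of the $R$ repetitions.''
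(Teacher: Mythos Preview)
Your approach mirrors the paper's exactly: reduce to a single repetition via independence, condition on whether $x$'s cell contains a positive, and sandwich the hypergeometric ``all--miss'' probability from both sides. The one place the paper proceeds differently is precisely the step you flag as delicate. Rather than bounding the product $\prod_{i=0}^{|K|-1}\tfrac{N-N/B-i}{N-1-i}$ term by term, the paper rewrites it in closed form as the binomial ratio $\binom{N-N/B}{|K|}/\binom{N-1}{|K|}$ and applies the standard sandwich $(a/b)^b\le\binom{a}{b}\le(ea/b)^b$ once to the numerator and once to the denominator; the factors of $e$ in the statement then fall out immediately. Your monotonicity upper bound is in fact sharper than the paper's (you obtain $r_0^{|K|}$ rather than $(er_0)^{|K|}$ and then relax it to match the stated lemma), but your per-term lower bound is harder to make precise: the hint $\log(1-t)\ge -t/(1-t)$ does not obviously yield the termwise claim ``each factor $\ge r_0/e$,'' and the side hypothesis you introduce is stronger than what the binomial-coefficient route requires (there one only needs $|K|\le N-N/B$). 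The paper's argument is cleaner at that step; otherwise the two proofs coincide.
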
  

We next bound the runtime of Algorithm \ref{alg:query}.
\begin{thm}
\label{thm:group_test_query_time}
  Under the assumptions in Lemma~\ref{lem:overall_tpr_fpr}, suppose that each test runs in time $O(T)$. Then with probability $1 - \delta$

\begin{equation}
    t_{\text{query}} = O\left(BRT + \frac{RN}{B}\left(p|K| + qB\right) \log(1/\delta)\log N\right)
\end{equation}
  
%   the (expected) query time $t_{\text{query}}$ of Algorithm~\ref{alg:query} is:
%   \begin{align}
%       \mathbb{E}[t_{\text{query}}] = O\left(BRT + \frac{RN}{B}\left(p|K| + qB\right)\right)
%   \end{align}
\end{thm}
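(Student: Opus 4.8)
The plan is to split $t_{\text{query}}$ into the cost of evaluating the tests and the cost of the set bookkeeping, bound each, and then apply a concentration argument to get the high-probability statement. For the tests: Algorithm~\ref{alg:query} evaluates each of the $BR$ classifiers $C_{r,b}(y)$ exactly once, so this costs $O(BRT)$, which also absorbs the $O(B)$-per-column work of scanning which tests fired. For the set operations: in repetition $r$ let $F_r$ be the number of cells whose test fires; since the $M_{r,b}$ in a column partition $D$, the candidate set $Y_r=\bigcup_{b:\,C_{r,b}(y)=1} M_{r,b}$ has size $\approx \tfrac{N}{B}F_r$, building it costs $O(|Y_r|)$, and folding it into the running intersection $\hat K\cap Y_r$ costs $O(|Y_r|\log N)$ with a comparison-based set (the $\log N$ being the per-element cost on sets as large as $N$; hashing would remove it, but I keep it to match the statement). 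Implementing the first intersection lazily as $\hat K\leftarrow Y_0$ avoids an $O(N)$ initialization. Hence $t_{\text{query}}=O\!\big(BRT+\log N\sum_r|Y_r|\big)$, and the task reduces to an upper bound on $\sum_r F_r$.

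I would then bound $\mathbb{E}[F_r]$. A cell fires only if it contains a near neighbor — at most $|K|$ such cells per column, each firing with a probability I fold into the factor $p$ — or it contains no near neighbor but false-fires, which occurs with probability $\le q$, independently across the remaining $\le B$ cells once the partition is fixed. Thus $\mathbb{E}[F_r]\le p|K|+qB$, and by linearity $\mathbb{E}\big[\sum_r F_r\big]\le R(p|K|+qB)$.

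The crux is upgrading this expectation into a high-probability bound with only a $\log(1/\delta)$ overhead rather than the factor $1/\delta$ that Markov would cost. The key observation is that the $R$ repetitions use independent permutations and independent test randomness, so $F_0,\dots,F_{R-1}$ are independent, and each $F_r$ is itself stochastically dominated by a sum of independent Bernoullis (at most $|K|$ governed by $p$, at most $B$ governed by $q$, the latter independence holding conditionally on the partition). Hence $\sum_r F_r$ is dominated by a sum of independent Bernoullis with mean $M:=R(p|K|+qB)$, and a Chernoff bound gives $\sum_r F_r=O(M+\log(1/\delta))$ with probability $\ge 1-\delta$, using the multiplicative tail when $M\gtrsim\log(1/\delta)$ and the additive tail otherwise. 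Substituting back, $\log N\sum_r|Y_r|=O\!\big(\tfrac{N}{B}(M+\log(1/\delta))\log N\big)$, and collapsing the $\tfrac{N}{B}\log(1/\delta)\log N$ term into the main one (using $\log(1/\delta)\ge 1$ and $R(p|K|+qB)\gtrsim 1$) yields $O\!\big(\tfrac{RN}{B}(p|K|+qB)\log(1/\delta)\log N\big)$, which with the $O(BRT)$ term is exactly the claim.

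The main obstacle is this concentration step: obtaining $\log(1/\delta)$ instead of $1/\delta$ relies on exploiting independence across repetitions (and, within a repetition, conditionally on the partition) and on the two-regime Chernoff case split. The secondary work is accounting: locating the stray $\log N$ in the per-element cost of the union/intersection operations rather than in the probabilistic analysis, handling the lazy first intersection and the floor/ceiling in cell sizes, and checking the mild parameter conditions ($\delta$ bounded away from $1$, $R(p|K|+qB)\gtrsim 1$) needed to collapse everything into the stated product form.
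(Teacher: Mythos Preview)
Your proposal is correct and follows essentially the same route as the paper: split into the $O(BRT)$ test cost and the set-operation cost, attribute the $\log N$ factor to comparison-based (sorting/intersection) per-element work, bound the number of firing cells by a sum of independent Bernoullis with mean $p|K|+qB$ under the assumption $p>q$, and apply a multiplicative Chernoff bound to turn the expectation into a $\log(1/\delta)$ high-probability factor. The only cosmetic difference is that the paper applies Chernoff per column (to each $F_r$ separately) rather than to the aggregate $\sum_r F_r$, which changes nothing asymptotically.
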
  

\subsection{Bounding the Test Cost}
\label{sec:bounding_test_cost}
% with $m$ arrays. Each of the $m$ arrays is indexed by a key consisting of $L$ locally sensitive hash functions. Thus, if a cell consists of the points $\{x_1, \ldots x_n\}$, then the probability a query point $y$ collides in a single Bloom filter is $$1 - \Pi_{i=1}^n (1 - sim(x_i, y)^L)$$
% We also define a threshold ratio $t$, such that if the fraction of hash collisions across the $m$ Bloom filters in a cell $(r, b)$ is greater than or equal to $t$, then $C(r, b) = 1$; otherwise $C(r, b) = 0$. Finally, given a dataset $D$ and a query $x$, let $s_i$ be the similarity of the point in $D$ with the $i$th highest similarity to $x$. We then have the following bounds on $p$ and $q$ using standard inequalities.
% \begin{thm}
% Given a query $x$ and an $(R, B, S)$-FLINNG index with $C(r, b)$ an $m, L$ distance-sensitive Bloom filter for all $r$ and $b$, then the cellwise true positive rate $p$ is greater than or equal to
% \begin{align}
%     1 - exp\left(-2\left(-t + \left(s_{|K|}\right)^L\right)^2m\right)
% \end{align}
% and the cellwise false positive rate $q$ is less than or equal to
% \begin{align}
%  \exp\left(-2\left(t - \frac{N}{B}(s_{|K| + 1})^L\right)^2m\right)
% \end{align}
% \end{thm}

We next bound the runtime and error rates of a specific binary classifier: a distance-sensitive Bloom filter. To distinguish between the $K$ nearest neighbors and the rest of the dataset, we apply Theorem~\ref{thm:distance_sensitive_Bloom} with $S_H = \mathrm{sim}(x_{|K|},y) = s_{|K|}$ and $S_L = \mathrm{sim}(x_{|K|+1},y) = s_{|K|+1}$, where $x_{|K|}$ is $y$'s $K$th nearest neighbor. We also assume that the filter contains $\frac{N}{B}$ points and $B = 2\sqrt{N}$. Our goal in this section is to select a threshold $t$, number of bit arrays $m$, and LSH parameter $L$ to obtain a specified value of $p$ and $q$. Once we have a test with the required error rates, we will bound the test time $T$.

% Now that we have bounds on $p$ and $q$, we can reverse the problem and set a desired $p$ and $q$ and solve for values of $m$ and $L$ that will give us at least these $p$ and $q$. Furthermore, with $m$, $L$, $p$, and $q$, we can solve for $T$ and the amortized query time. We first need some further assumptions and definitions. 

Without imposing additional requirements on the query and dataset, it is impossible to design a filter for an arbitrary $p$ and $q$, as observed by~\cite{kirsch2006distance}. However, it is not a serious limitation. We can obtain the error rates provided that the query has $K$ clearly-defined neighbors and the non-neighbor points are easily distinguished from the neighbors (i.e. $s_{|K|+1}\ll s_{|K|}$). This is closely related to the stability condition from~\cite{beyer1999nearest}, so we refer to such queries as \textit{stable}. We formally define a $\gamma$-stable query as:

\begin{definition} $\gamma$\textbf{-stable Query:} We say that a query is $\gamma$-stable if 
$\frac{\log(s_{|K|})}{\log(s_{|K|+1}) - \log(s_{|K|})} \leq \gamma$
\end{definition}

We are now ready to design the classifier. Our classifier achieves the error rates $p$ and $q$ for any $\gamma$-stable query and has bounded query time.

\begin{thm}
\label{thm:bound_test_query_time}
Given a true positive rate $p$, false positive rate $q$ and stability parameter $\gamma$, it is possible to choose $m$, $L$ and $t$ so that the resulting distance-sensitive Bloom filter has false positive rate $p$ and false negative rate $q$ for all $\gamma$-stable queries. The query time is
\begin{align}
O(mL) = O\left(-\log(\min(q, 1 - p))N^{\gamma}\log(N)\right)
\end{align}
\end{thm}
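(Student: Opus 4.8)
The plan is to specialize Theorem~\ref{thm:distance_sensitive_Bloom} to the parameters fixed in this section --- $S_H = s_{|K|}$, $S_L = s_{|K|+1}$, and a filter holding $N' := N/B = \tfrac{1}{2}\sqrt{N}$ points --- and then to choose the three free parameters $L$, $t$, $m$ to hit the target rates while keeping $mL$ small. After this specialization the true positive rate is governed by the margin $(s_{|K|}^L - t)$ and the false positive rate by $(t - N' s_{|K|+1}^L)$, so both bounds are informative only when $t$ lies strictly between $N' s_{|K|+1}^L$ and $s_{|K|}^L$. Hence we must first arrange that the union-bound ``false-positive mass'' $N' s_{|K|+1}^L$ falls below the single-neighbor ``true-positive probability'' $s_{|K|}^L$, which is exactly where $L$ and the stability hypothesis enter.

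\textbf{Step 1 (choosing $L$).} The inequality $N' s_{|K|+1}^L \le \tfrac{1}{2} s_{|K|}^L$ holds once $L \ge \log(2N')/\log(s_{|K|}/s_{|K|+1})$, and for such $L$ the usable margin $\Delta := s_{|K|}^L - N' s_{|K|+1}^L$ satisfies $\Delta \ge \tfrac{1}{2} s_{|K|}^L$. The role of $\gamma$-stability is to control how small $s_{|K|}^L$ can get at this threshold: rewriting the definition gives $\log(1/s_{|K|}) \le \gamma\log(s_{|K|}/s_{|K|+1})$, so taking $L$ to be (essentially) $\log(2N')/\log(s_{|K|}/s_{|K|+1})$ yields $s_{|K|}^L \ge (2N')^{-\gamma}$, whence $\Delta \ge \tfrac{1}{2} (2N')^{-\gamma} = \tfrac{1}{2} N^{-\gamma/2}$. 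This same $L$ is $O(\log N)$, since for a $\gamma$-stable query with $\gamma$ and $\log(1/s_{|K|})$ of constant order the denominator $\log(s_{|K|}/s_{|K|+1})$ is bounded below by a constant.

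\textbf{Step 2 (choosing $t$ and $m$).} Take $t = \tfrac{1}{2}(s_{|K|}^L + N' s_{|K|+1}^L)$, so that both margins $s_{|K|}^L - t$ and $t - N' s_{|K|+1}^L$ equal $\Delta/2$ --- the choice that equalizes the two error-rate bounds, since any other split shrinks the smaller margin and only forces a larger $m$. Theorem~\ref{thm:distance_sensitive_Bloom} then gives true positive rate at least $1 - \exp(-m\Delta^2/2)$ and false positive rate at most $\exp(-m\Delta^2/2)$, so $m = \lceil 2\Delta^{-2}\log(1/\min(q,1-p))\rceil$ makes the former $\ge p$ and the latter $\le q$. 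Since $\Delta^{-2}\le 4N^{\gamma}$, this gives $m = O(-\log(\min(q,1-p))\,N^{\gamma})$, and combined with $L = O(\log N)$ we obtain $mL = O(-\log(\min(q,1-p))\,N^{\gamma}\log N)$. The filter query performs $m$ array lookups, each evaluating $L$ concatenated base hashes, so its cost is $\Theta(mL)$, which is the claimed bound.

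\textbf{Main obstacle.} The delicate part is the joint constraint on $L$: it must be large enough that the false-positive mass $N' s_{|K|+1}^L$ drops below the collision probability $s_{|K|}^L$ (forcing $L = \Omega(\log N/\log(s_{|K|}/s_{|K|+1}))$), yet small enough that $s_{|K|}^L$ --- which lower-bounds the margin $\Delta$, hence controls $m \sim \Delta^{-2}$ --- does not decay faster than $N^{-\gamma/2}$. The definition of a $\gamma$-stable query is precisely the assertion that these requirements are simultaneously satisfiable, and pinning the polynomial exponent down to exactly $\gamma$ requires reading the stability inequality as the bound $\log(1/s_{|K|})/\log(s_{|K|}/s_{|K|+1}) \le \gamma$ and substituting the minimal feasible $L$ into $s_{|K|}^L$. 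The remaining steps --- the midpoint choice of $t$, the Hoeffding bounds already packaged in Theorem~\ref{thm:distance_sensitive_Bloom}, and the final bookkeeping --- are routine.
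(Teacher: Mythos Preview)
Your proof is correct and follows essentially the same route as the paper: the midpoint choice of $t$, the choice of $L$ as the threshold at which $N' s_{|K|+1}^L \le \tfrac{1}{2} s_{|K|}^L$, the use of $\gamma$-stability to lower-bound $s_{|K|}^L$ by $N^{-\gamma/2}$ (hence $\Delta^{-2} \le 4N^\gamma$), and then solving for $m$ all match the paper's derivation. The one minor deviation is your justification of $L = O(\log N)$: you derive it from the stability inequality together with an assumption that $\log(1/s_{|K|})$ is of constant order, whereas the paper assumes directly (in a footnote) that the ratio $s_{|K|}/s_{|K|+1}$ is bounded --- both are equivalent side assumptions outside the theorem statement.
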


% Algorithm \ref{alg:index} takes any $B$ and $R$, but to simplify our analysis we let $B = 2\sqrt{N}$ and $R = w\log(N)$ for $w \ge 1$. Note that these exact values may not be integers, but we could take the ceiling and our asymptotic analysis would not greatly change. 

% We also define a query specific sparsity measure
% $$\gamma = \frac{\log \left( s_{|K|}\right)}{\log \left(s_{|K + 1|}\right) - \log \left( s_{|K|}\right)}$$
% Similarity is a measure from $0$ to $1$, so since $s_{|K+1|} < s_{|K|}$, $\gamma$ is positive (a negative number over a negative number). Intuitively, if $\gamma$ is small then the query is easier, since $\gamma$ shrinks proportional to the similarity of $x$ with the $|K|$th nearest neighbor (the farthest neighbor we want to return) and grows proportional to how close in similarity the $|K|$th and $|K + 1|$th neighbor (the closest neighbor we do not want to return) are. We are now ready to state our result for $T$, which actually ends up being a result about distance-sensitize Bloom filters.

\subsection{Query Time Analysis}

In this section, we combine previous results to solve the randomized nearest neighbor problem. First, we consider the query time of a $2\sqrt{N} \times R$ grid of Bloom filter classifiers. Lemma~\ref{lem:group_distance_sensitive_query_time} is a straightforward application of Theorem~\ref{thm:group_test_query_time} to the group test design from Theorem~\ref{thm:bound_test_query_time}. 

% We first examine the query time of a $B \times R$ grid that has distance-sensitive Bloom filters as classifiers. The query time result in Theorem \ref{thm:group_test_query_time} applies to this grid, but now that we have a specific classifier, we can substitute in an explicit equation for $T$ using Theorem \ref{thm:bound_test_query_time}. As in section $5.2$, we take $B = 2\sqrt{N}$ and assume $B$ divides $N$. 

\begin{lemma}
\label{lem:group_distance_sensitive_query_time}
Under the assumptions in Lemma \ref{lem:overall_tpr_fpr}, we can use distance-sensitive Bloom filters as tests to achieve the following query time $t_{query}$ of Algorithm $2$ with probability $1 - \delta$
\begin{align*}
t_{query} = O(&RN^{\frac{1}{2} + \gamma}\log(N)\max (-\log(q), -\log(1 - p)) +RN^\frac{1}{2}\log^2(N)(|K| + qN^\frac{1}{2})\log(1/\delta))
\end{align*}
\end{lemma}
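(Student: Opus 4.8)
\emph{Proof proposal.} The plan is to treat this as a mechanical substitution: take the generic group-testing runtime bound of Theorem~\ref{thm:group_test_query_time}, plug in the per-test cost of a distance-sensitive Bloom filter from Theorem~\ref{thm:bound_test_query_time}, fix $B = 2\sqrt{N}$ as in Section~\ref{sec:bounding_test_cost}, and then collect the resulting factors. Before substituting, I would check that the hypotheses line up. Theorem~\ref{thm:bound_test_query_time} supplies, for any target true positive rate $p$, false positive rate $q$, and any $\gamma$-stable query, a Bloom filter test with exactly these error rates and query time $T = O(mL) = O\bigl(-\log(\min(q,1-p))\,N^{\gamma}\log N\bigr)$; this is precisely the ``each test has i.i.d.\ error rates $p,q$ and runs in time $O(T)$'' assumption required by Lemma~\ref{lem:overall_tpr_fpr} and Theorem~\ref{thm:group_test_query_time}. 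I would also confirm $B = 2\sqrt{N}$ is consistent with the setup used to derive the test cost (the filter then holds $N/B = \tfrac12\sqrt{N}$ points, matching Section~\ref{sec:bounding_test_cost}) and gives a valid doubly-regular design.

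Then I would substitute into $t_{\text{query}} = O\bigl(BRT + \tfrac{RN}{B}(p|K| + qB)\log(1/\delta)\log N\bigr)$. For the first summand, $BRT = 2\sqrt{N}\cdot R\cdot O\bigl(-\log(\min(q,1-p))N^{\gamma}\log N\bigr)$; rewriting $-\log\min(q,1-p) = \max(-\log q,\,-\log(1-p))$ yields $O\bigl(RN^{1/2+\gamma}\log(N)\max(-\log q,-\log(1-p))\bigr)$, the first term of the claim. For the second summand, with $B = 2\sqrt{N}$ we have $\tfrac{RN}{B}(p|K|+qB) = \tfrac{R}{2}\sqrt{N}\bigl(p|K| + 2q\sqrt{N}\bigr)$; bounding $p \le 1$ and absorbing constants gives $O\bigl(RN^{1/2}(|K| + qN^{1/2})\bigr)$, and attaching the $\log(1/\delta)\log N$ factor (plus the extra $\log N$ that arises when the expected candidate-set sizes are converted to a bound holding with probability $1-\delta$ across the $R$ repetitions) yields $O\bigl(RN^{1/2}\log^2(N)(|K|+qN^{1/2})\log(1/\delta)\bigr)$. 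Adding the two summands is exactly the stated $t_{\text{query}}$.

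The content of this lemma is entirely in the two theorems it cites, so the only thing to be careful about is the logarithmic and constant bookkeeping: applying the $\min$/$\max$ identity correctly to the test cost, making sure the $p \le 1$ simplification is harmless (it is, since in the main theorem $p$ is pushed close to $1$ anyway), checking $|K|$ and $q$ dependencies are not accidentally dropped, and tracking exactly which $\log N$ and $\log(1/\delta)$ factors come from each expectation-to-high-probability step. I do not expect any genuine obstacle here — at most a minor annoyance in matching the power of $\log N$ in the second term to what the raw substitution produces.
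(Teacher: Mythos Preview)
Your proposal is correct and follows exactly the paper's approach: substitute $B = 2\sqrt{N}$ and the test cost $T$ from Theorem~\ref{thm:bound_test_query_time} into the generic bound of Theorem~\ref{thm:group_test_query_time}, then use $-\log\min(q,1-p) = \max(-\log q,-\log(1-p))$ and $p\le 1$ to clean up. One small point: your justification for the second $\log N$ in the intersection term (``the extra $\log N$ that arises when the expected candidate-set sizes are converted to a bound holding with probability $1-\delta$ across the $R$ repetitions'') is not quite right---the Chernoff step in Theorem~\ref{thm:group_test_query_time} already handles that conversion and contributes only the $\log(1/\delta)$ factor. The paper's own proof simply writes $\log^2 N$ after the substitution without further comment, so the extra logarithm is either a harmless loosening of the bound or a typo; either way, your instinct that this is the one spot requiring care is accurate.
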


There are two ways that Algorithm~\ref{alg:query} can fail to solve the nearest neighbor problem (i.e. $|K| = 1$). We may fail to return the nearest neighbor, but we may also return any point in $D$ that is not the nearest neighbor. We can determine the values of $p$ and $q$ needed to achieve an overall failure rate $\delta$ by requiring that both events occur with probability $< \frac{\delta}{2}$ and applying the union bound.
% To achieve a failure probability smaller than $\delta$, we impose 
% error rates of our filter to be from Lemma \ref{lem:overall_tpr_fpr} to be at maximum $\frac{\delta}{2}$. We can then work backwards to determine $p$ and $q$.

% Note to solve the randomized nearest neighbor problem, we require not only that $\ref{alg:query}$ returns the nearest neighbor, but that every other point in $D$ is not returned. 
\begin{lemma}
\label{lem:cellwise_rate_bounds}
Under the assumptions in Lemma \ref{lem:overall_tpr_fpr}, we can build a data structure that solves the randomized nearest neighbor problem for sufficiently large $N$ and small $\delta$, where\footnote{We require $N \ge 150$ and $\delta$ small enough that $R \ge 10\log N$}
\begin{align}
    p = 1 - \frac{\delta}{2R} \qquad q = N^{-\frac{1}{2}} \qquad
    R = \frac{\log(\frac{1}{\delta})}{\log (4.80N^\frac{1}{2}) - \log (2e^2 + 3.44N^\frac{1}{2})}
\end{align}
\end{lemma}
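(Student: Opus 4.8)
The plan is to instantiate Lemma~\ref{lem:overall_tpr_fpr} with $|K| = 1$ (the nearest-neighbor case) and choose the cell-wise test error rates $p$ and $q$ so that both failure modes of Algorithm~\ref{alg:query} — missing the nearest neighbor, and reporting some non-neighbor — each occur with probability at most $\delta/2$. A union bound over these two events then gives overall success probability at least $1-\delta$, which is exactly the randomized nearest neighbor guarantee of Definition~\ref{def:prob_statement}. So the proof is essentially a matter of (i) writing down the two failure-probability inequalities from Lemma~\ref{lem:overall_tpr_fpr}, (ii) plugging in the claimed $p$, $q$, and $B = 2\sqrt{N}$, and (iii) verifying that the claimed value of $R$ makes both inequalities hold for $N \ge 150$ and $R \ge 10\log N$.

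First I would handle the false-negative side. By Lemma~\ref{lem:overall_tpr_fpr}, $\Pr[\text{miss } x_{\mathrm{NN}}] \le 1 - p^R$. With $p = 1 - \frac{\delta}{2R}$ we get $p^R = \left(1 - \frac{\delta}{2R}\right)^R \ge 1 - \frac{\delta}{2}$ by Bernoulli's inequality, so this failure event has probability at most $\delta/2$ with no further work — this is the easy half. Next I would handle the false-positive side, which is where the specific formula for $R$ comes from. With $|K| = 1$, the bracketed per-repetition false-positive probability in Lemma~\ref{lem:overall_tpr_fpr} simplifies: $\left(\frac{eN(B-1)}{B(N-1)}\right)^{1}$ and $\left(\frac{N(B-1)}{eB(N-1)}\right)^{1}$ are just constants times $1 + O(1/N)$, so the per-repetition bound becomes roughly $q \cdot e\alpha + p\,(1 - \alpha/e)$ where $\alpha = \frac{N(B-1)}{B(N-1)} \approx 1$. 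Substituting $q = N^{-1/2}$, $B = 2\sqrt N$, and $p \le 1$, and being careful to track the $O(1/N)$ corrections (this is presumably where the constants $4.80$, $2e^2$, and $3.44$ and the threshold $N\ge 150$ enter — bounding $\alpha$ and $e\alpha$ from above and below over the range $N \ge 150$), one gets a per-repetition false-positive bound of the form $\frac{2e^2 + 3.44\sqrt N}{4.80\,N}$ or equivalently $\frac{2e^2 + 3.44 N^{1/2}}{4.80 N^{1/2}} \cdot N^{-1/2}$. Raising this to the $R$th power and union-bounding over the (at most $N$) non-neighbor points, the total false-positive probability is at most $N \cdot \left(\frac{2e^2 + 3.44 N^{1/2}}{4.80 N^{1/2}}\right)^R$. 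Requiring this to be $\le \delta/2$ and solving for $R$ gives $R \ge \frac{\log(N) + \log(2/\delta)}{\log(4.80 N^{1/2}) - \log(2e^2 + 3.44 N^{1/2})}$, which matches the stated $R$ up to absorbing the $\log N$ and the constant $2$ into the $\log(1/\delta)$ factor under the hypothesis $R \ge 10\log N$ and $\delta$ small (this is the footnote's role).

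The main obstacle I anticipate is bookkeeping rather than conceptual: the per-repetition false-positive bound from Lemma~\ref{lem:overall_tpr_fpr} mixes a $q$-term and a $p$-term, and with $p$ close to $1$ the $p$-term $p\left(1 - \alpha^{|K|}/e^{|K|}\right)$ does \emph{not} vanish — it contributes a constant-order term $\approx 1 - 1/e$ to the per-repetition probability. One must check that even so, the per-repetition false-positive probability stays strictly below $1$ by a margin that shrinks only polynomially in $N$, so that taking $R = \Theta(\log(1/\delta))$ repetitions drives the product below $\delta/(2N)$. Concretely, the quantity $\log(4.80 N^{1/2}) - \log(2e^2 + 3.44 N^{1/2})$ in the denominator must be shown to be positive (hence the $N \ge 150$ requirement, since $4.80\sqrt N > 2e^2 + 3.44\sqrt N \iff 1.36\sqrt N > 2e^2 \approx 14.78 \iff \sqrt N > 10.9 \iff N > 118$, rounded up to a safe $150$), and that it does not decay too fast as $N\to\infty$ (in fact it tends to $\log(4.80/3.44) > 0$). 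Verifying these elementary inequalities carefully, and confirming that $R \ge 10\log N$ suffices to fold the additive $\log N$ into the $\log(1/\delta)$ numerator, completes the argument; I would defer the fully explicit constant-chasing to the supplementary material, as the authors indicate.
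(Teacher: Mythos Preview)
Your approach is correct and matches the paper's: split the failure probability into the miss-the-neighbor event and the report-a-non-neighbor event, bound each by $\delta/2$, handle the first via Bernoulli's inequality to get $p = 1 - \delta/(2R)$, and handle the second by taking the $|K|=1$ case of Lemma~\ref{lem:overall_tpr_fpr}, raising to the $R$th power, and union-bounding over $N$.

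One computational confusion to flag. Your claimed per-repetition false-positive bound $\frac{2e^2 + 3.44\sqrt N}{4.80\,N}$ is $O(N^{-1/2})$, but the actual bracket from Lemma~\ref{lem:overall_tpr_fpr} is $\Theta(1)$: with $|K|=1$ and $B=2\sqrt N$ it equals $q\cdot e\beta + p\,(1-\beta/e)$ where $\beta = \frac{N(B-1)}{B(N-1)}\to 1$, so the $p(1-1/e)\approx 0.632$ term dominates --- exactly as you correctly observe in your final paragraph. The specific constants $4.80$, $2e^2$, $3.44$ are \emph{not} the raw bracket; in the paper they appear only after (i) rewriting the union-bound requirement as a constraint on $q$ involving $(\delta/(2N))^{1/R}$, (ii) using $R\ge 10\log N$ to replace $N^{-1/R}$ by the constant $e^{-1/10}$ and $2^{-1/R}$ by $2^{-1/50}$, and (iii) tightening via $N\ge 150$. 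Only then does setting $q = N^{-1/2}$ and solving yield the stated $R$ exactly. So the $\log N$ from the union bound is already baked into those constants, and there is no leftover $\log N$ in the numerator to ``absorb'' afterward --- the footnote's hypotheses are used \emph{inside} the derivation of the constants, not as a post-hoc cleanup.
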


We obtain our main theorem by using the values from Lemma~\ref{lem:cellwise_rate_bounds} with the query time from Lemma~\ref{lem:group_distance_sensitive_query_time}. Note that the query time is sublinear when the (data-dependent) stability parameter $\gamma < \frac{1}{2}$.

\begin{thm} \textbf{(Main Theorem)}
\label{thm:final_query_time}
Under the assumptions of Lemma \ref{lem:cellwise_rate_bounds}, we solve the randomized nearest neighbor problem for $\gamma$-stable queries in time $t_{query}$: 
\begin{align}
t_{query} = O\left(N^{\frac{1}{2} + \gamma}\log^4(N)\log^3\left(\frac{1}{\delta}\right)\right)
\end{align}
\end{thm}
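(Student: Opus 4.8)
The plan is to derive the Main Theorem by simply substituting the parameter choices from Lemma~\ref{lem:cellwise_rate_bounds} into the query-time bound of Lemma~\ref{lem:group_distance_sensitive_query_time} and then absorbing everything into big-$O$ notation, treating $\gamma$ as a fixed constant and $N,\ 1/\delta$ as the quantities that grow. Recall that Lemma~\ref{lem:group_distance_sensitive_query_time} gives, for the $2\sqrt{N}\times R$ Bloom-filter grid,
\begin{align*}
t_{query} = O\!\big(&\,R\,N^{\frac12+\gamma}\log(N)\max(-\log q,\,-\log(1-p)) \\
&+ R\,N^{\frac12}\log^2(N)\,(|K| + qN^{\frac12})\,\log(1/\delta)\big),
\end{align*}
and we are in the nearest-neighbor regime so $|K|=1$.

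First I would bound $R$. From Lemma~\ref{lem:cellwise_rate_bounds}, $R = \log(1/\delta) / \big(\log(4.80\,N^{1/2}) - \log(2e^2 + 3.44\,N^{1/2})\big)$. The denominator is $\log\!\big(4.80\,N^{1/2}/(2e^2+3.44\,N^{1/2})\big)$; as $N\to\infty$ the argument tends to $4.80/3.44 \approx 1.395 > 1$, so for $N\ge 150$ the denominator is bounded below by a positive constant (the footnote's $N\ge 150$ condition is exactly what makes this clean), giving $R = O(\log(1/\delta))$. Next, $q = N^{-1/2}$, so $-\log q = \tfrac12\log N = O(\log N)$, and $qN^{1/2} = 1$, so the factor $(|K| + qN^{1/2}) = 2 = O(1)$. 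Finally $p = 1 - \delta/(2R)$, so $1-p = \delta/(2R)$ and $-\log(1-p) = \log(2R/\delta) = \log(2R) + \log(1/\delta) = O(\log\log(1/\delta) + \log(1/\delta)) = O(\log(1/\delta))$. Hence $\max(-\log q, -\log(1-p)) = O(\log N + \log(1/\delta)) = O(\log(N)\log(1/\delta))$ (crudely bounding a sum of two positive quantities by their product, valid once both exceed $1$, which holds for $N$ and $1/\delta$ large enough as in the lemma's hypotheses).

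Now I would plug these in. The first term becomes $O\big(\log(1/\delta)\cdot N^{1/2+\gamma}\cdot \log(N)\cdot \log(N)\log(1/\delta)\big) = O\big(N^{1/2+\gamma}\log^2(N)\log^2(1/\delta)\big)$, and the second term becomes $O\big(\log(1/\delta)\cdot N^{1/2}\cdot\log^2(N)\cdot 1\cdot\log(1/\delta)\big) = O\big(N^{1/2}\log^2(N)\log^2(1/\delta)\big)$, which is dominated by the first (since $N^\gamma \ge 1$). So a direct substitution actually yields $O(N^{1/2+\gamma}\log^2(N)\log^2(1/\delta))$, which is \emph{stronger} than the claimed $O(N^{1/2+\gamma}\log^4(N)\log^3(1/\delta))$; the slack in the stated theorem presumably comes from looser bounding in the supplementary proof (e.g.\ a coarser handling of the $R$ denominator or of the test-cost constants), and in any case the stated bound follows a fortiori. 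The one genuine check — and the step I expect to need the most care — is verifying that the denominator of $R$ is bounded away from zero uniformly for all admissible $N$; this is where the exact constants $4.80$, $3.44$, $2e^2$ and the threshold $N\ge 150$ matter, and it is the only place the argument could break if, say, the regime were $N\ge 2$. Everything else is routine logarithm arithmetic and big-$O$ absorption, with the sublinearity remark ($\gamma<\tfrac12$) immediate from the exponent $\tfrac12+\gamma$.
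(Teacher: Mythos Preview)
Your proposal is correct and follows essentially the same route as the paper: substitute the values of $p$, $q$, $R$, $|K|=1$ from Lemma~\ref{lem:cellwise_rate_bounds} into Lemma~\ref{lem:group_distance_sensitive_query_time} and simplify. Your verification that the denominator of $R$ is $\Theta(1)$ matches the paper's own remark, and your handling of $-\log q$ and $-\log(1-p)$ is the same.

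Two small differences are worth noting. First, the paper distinguishes the failure probability for the query-time bound (their $\delta_3$, coming from Lemma~\ref{lem:group_distance_sensitive_query_time}) from the failure probability for correctness (the $\delta$ in Lemma~\ref{lem:cellwise_rate_bounds}), and then absorbs $\delta_3$ into $\delta$ at the end; you implicitly used the same symbol for both, which is harmless after a $\delta\mapsto\delta/2$ split. Second, the paper performs an additional case split on whether $\delta$ is large enough that the simplifications in Lemma~\ref{lem:cellwise_rate_bounds} would force $R<10\log N$; in that regime they instead run the structure with $R=10\log N$ exactly (solving a harder instance), which is where the extra $\log N$ factors in the stated bound originate. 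Under the theorem's stated hypotheses this case is vacuous, so your direct substitution is valid and your tighter $O(N^{1/2+\gamma}\log^2(N)\log^2(1/\delta))$ bound is indeed what Case~2 of the paper's own proof produces; the looser stated bound simply covers both cases uniformly.
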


\section{Implementation}

There are several nontrivial implementation considerations to achieved good practical performance. First, we use the same $m$ LSH functions for all of the filters, allowing us to hash the query only one time. Second, we represent the distance-sensitive Bloom filters as lists of hash codes rather than bit arrays. This allows us to represent the FLINNG structure as a \textit{reverse index} from hash values to cells. The reverse index is a lookup table that, given a hash value $h$, returns a list of cells whose distance-sensitive Bloom filters contain $h$. We keep a reverse index for each of the $m$ LSH functions. 

To query the index, we use the reverse index to count the number of times that each cell collides with the query across the $m$ LSH functions. This results in an array of $B\times R$ count values, one for each cell. To obtain the classifier outputs, we mark all cells with count values larger than a threshold $t$ as ``true.'' In Theorem~\ref{thm:bound_test_query_time}, we used a global value of $t$ for all queries. However, this does not work in practice because different queries require different similarity thresholds. To address this issue, we use Algorithm~\ref{alg:relax}, which relaxes $t$ until enough cells return ``true'' so that $k$ neighbors are returned. This process is equivalent to running Algorithm~\ref{alg:query} with decreasing thresholds until $k$ points are returned.

%  We begin with $t = m$ and progressively decrease $t$. As we relax the threshold, more cells return ``true'' and we have a larger set of points to return. We stop once the index returns the desired number of search results, which are (approximately) ordered by similarity. Since we use the same $m$ hash functions for all cells, this process is equivalent to running Algorithm~\ref{alg:query} with multiple thresholds. 

\begin{minipage}{0.54\textwidth}
\begin{algorithm}[H]
   \caption{Threshold Relaxation Algorithm}
   \label{alg:relax}
\begin{algorithmic}
   
   \STATE {\bfseries Input:} $A$: Array of cells, sorted in descending order by hash collisions, $k$: number of neighbors to return
   \STATE {\bfseries Output:} Approximate $k$ neighbors of the query
   \STATE $\mathrm{Counts} \gets$ Array of length $N$, initialized to $0$
   \STATE $\mathrm{Result} \gets$ Empty list of IDs
   
   \FOR{$i = 0$ \bfseries{to} $B\times R - 1$}
        \FOR{point $x \in $ cell $A[i]$}
            \STATE increment $\mathrm{Counts}[x]$
            \IF{$\mathrm{Counts}[x] = R$}
                \STATE append $x$ to $\mathrm{Result}$
                \IF{$|\mathrm{Result}| = k$}
                    \STATE return $\mathrm{Result}$
                \ENDIF
            \ENDIF
        \ENDFOR
   \ENDFOR

\end{algorithmic}
\end{algorithm}
\vspace{0.2cm}
\end{minipage}
\begin{minipage}{0.44\textwidth}

\begin{figure}[H]
\begin{center}
\centerline{\includegraphics[width=2.3in]{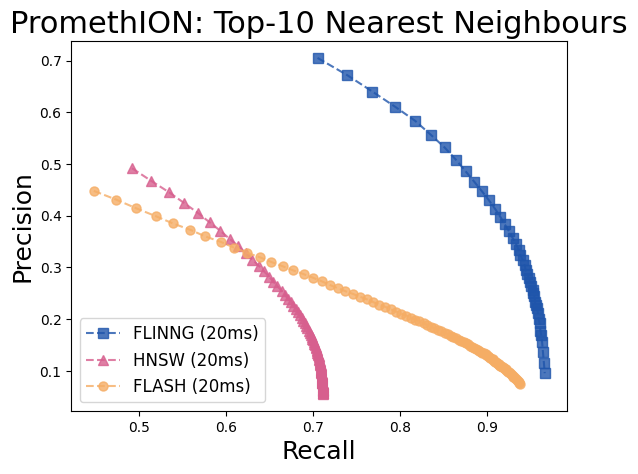}}
% \vspace{-0.5cm}
\caption{Precision recall tradeoff on PromethION with a query time limit of $20$ ms. Up and to the right is better.}
\label{fig:precision_recall}
\end{center}
\end{figure}
% \vspace{-0.2cm}
\end{minipage}

We implemented FLINNG in C++, compiled with the highest level of optimization with GCC, and used OpenMP to parallelize index construction. We implemented the reverse index as an $m \times 2^L$ table of pointers to vectors, where each vector contains a list of cells. To reduce index space, we store cell identifiers as short integers when possible. Finally, note that Algorithm~\ref{alg:relax} requires an array of length $N$ to store the count values associated with each point. For large datasets, this can exceed the CPU cache size leading to a slowdown from RAM access. However, we can avoid this issue if $R = 2$ (which is often sufficient for many applications). By storing the counts as a bit array of $N/8$ bytes, we can fit $N \leq 240$ million into a 30 MB CPU cache. We make this substitution where appropriate.

\section{Experiments}
% TODO: The graphs are too small. Could change to 3 per row and add another dataset to make it 6, or only show 3 of the datasets (one of url and webspam and one of promethion and genomes) and show the rest in the supplementary materials.
% TODO: Fill in dataset summary table and experiments
% TODO: Change hyperparemter explanation to table?
% TODO: Change comparison details to table?
% TODO: Tense issues
% TODO: Add FALCON to yfcc?

\textbf{Datasets:} We tested FLINNG on  high-dimensional genomics, web-scale data mining, and embedding search datasets. We list the datasets in Table~\ref{tab:datasets} and briefly describe them here. RefSeqG and RefSeqP are sets of reference genome and proteome sequences for approximately 88k species~\cite{ondov2016mash}. The sequences are represented as sets of $21$-grams ($k$-mers) and are compressed via MinHash. Similarity search is relevant to RefSeq because we can answer basic scientific questions by clustering genomes. PromethION is a stream of raw metagenomic sequence reads from the latest sequencing machine by Oxford Nanopore~\cite{nicholls2019ultra}, which generates 4TB of data per day. We preprocess the reads into $16$-grams. Here, similarity search is important for read de-duplication and other pre-assembly applications. The URL dataset and Webspam datasets are from the libsvm repository. The YFCC100M dataset consists of embeddings derived from neural network activations for 100M videos and images~\cite{thomee2016yfcc100m}.

\begin{minipage}{0.62\textwidth}
% \vspace{-0.5cm}
\begin{table}[H]
\small\addtolength{\tabcolsep}{-1pt}
\centering
\begin{tabular}{ccccc}
\toprule
Dataset & $N$ & $d$ & $\bar{d}$ & Description\\
\midrule
RefSeqG & 117k & 1.4T & 1k & \makecell[l]{Compressed genomes} \\
\hline
RefSeqP & 117k & 1.4T & 1k & \makecell[l]{Compressed proteins} \\
\hline
PromethION & 3.7M & 4.3B & 286 & \makecell[l]{Raw sequencer data}\\ 
\hline
URL & 2.4M & 3.2M & 116 & \makecell[l]{$n$-gram features}\\
\hline
Webspam & 340k & 16.6M & 3.7k & \makecell[l]{$n$-gram features}\\ 
\hline
YFCC100M & 97M & 4096 & 4096 & \makecell[l]{Neural embeddings}\\ 
\bottomrule
\end{tabular}
\vspace{0.1cm}
\caption{Datasets: We selected data from genomics, text and embedding problems. Datasets have $N$ points and $d$ dimensions, with an average $\bar{d}$ nonzero entries per point.}
\label{tab:datasets}
\end{table}

\end{minipage}
\begin{minipage}{0.33\textwidth}
% \vspace{-0.5cm}
\begin{table}[H]
\small\addtolength{\tabcolsep}{-1pt}
\centering
\begin{tabular}{c|cc}
\toprule
 & Memory & Indexing \\
\midrule
% FLINNG & FAISS & HNSW & FLASH \\
FLINNG & 3.5 GB & 40 sec \\ 
FAISS & 3.7 GB & 12 hr \\ 
HNSW & >1 TB & >5 days\\ 
FLASH & 4.3 GB & 80 sec\\ 
\bottomrule
\end{tabular}
\vspace{0.1cm}
% \caption{Memory and construction time for YFCC100M (other datasets show similar trends). The construction times for FLINNG, FLASH and FAISS \textit{do not include data hashing or $k$-means} because these processes can be efficiently implemented on GPUs. Hashing YFCC on CPU requires about 15 minutes.}
\caption{Index characteristics for YFCC100M.}
%  Construction times do not include hashing or $k$-means, which each contribute about 15 minutes on CPU.

\label{tab:memory}
\end{table}

\vspace{0.95cm}
\end{minipage}

\textbf{Baselines:} We compare FLINNG against popular implementations of graph algorithms, LSH, and quantization-based search. FALCONN is a fast implementation of the traditional LSH algorithm that supports multi-probe LSH in various metric spaces~\cite{NIPS2015_2823f479}. FLASH is a recent LSH algorithm that uses aggregate LSH count statistics to avoid distance computations~\cite{wang2018randomized}. HNSW is a multi-level graph search algorithm with exceptional performance on industry-standard benchmarks~\cite{malkov2018efficient}. We use the hnswlib library and extended it to work on genomic datasets. FAISS is a highly optimized quantization-based library used for billion-scale similarity search at Facebook~\cite{JDH17}. Finally, we compare against a simple inverted index approach for sparse data when feasible, as well as our implementation (GROUPS) of the other group testing algorithm from~\cite{shi2014group}.

\textbf{Experiment setup:} We construct indices in parallel but query using a single core. Due to limitations of baseline algorithms, we were unable to evaluate all algorithms on all tasks. Due to space constraints, we provide information about hyperparameters, experiment setup, and computing hardware in the supplementary materials.

\textbf{Results:} We show the recall-query time tradeoff for all algorithms in Figure~\ref{fig:trecall}. Figure~\ref{fig:precision_recall} shows the precision-recall curve for the top 10 neighbors on the PromethION dataset when we constrain the query time to 20ms. We find that FLINNG obtains between a 2-10x speedup on many search tasks. For example, FLINNG was 3.4 times faster than FAISS at the 0.99 recall level on YFCC100M and was 4x faster than HNSW on PromethION at the 0.8 recall level. FLINNG also has a small index size and construction time when compared with baselines (Table~\ref{tab:memory}).

% While multiple cores were used for indexing, all querying benchmarks were conducted on a single core.

\begin{figure*}[t]
\begin{center}
\centerline{\includegraphics[width=\textwidth]{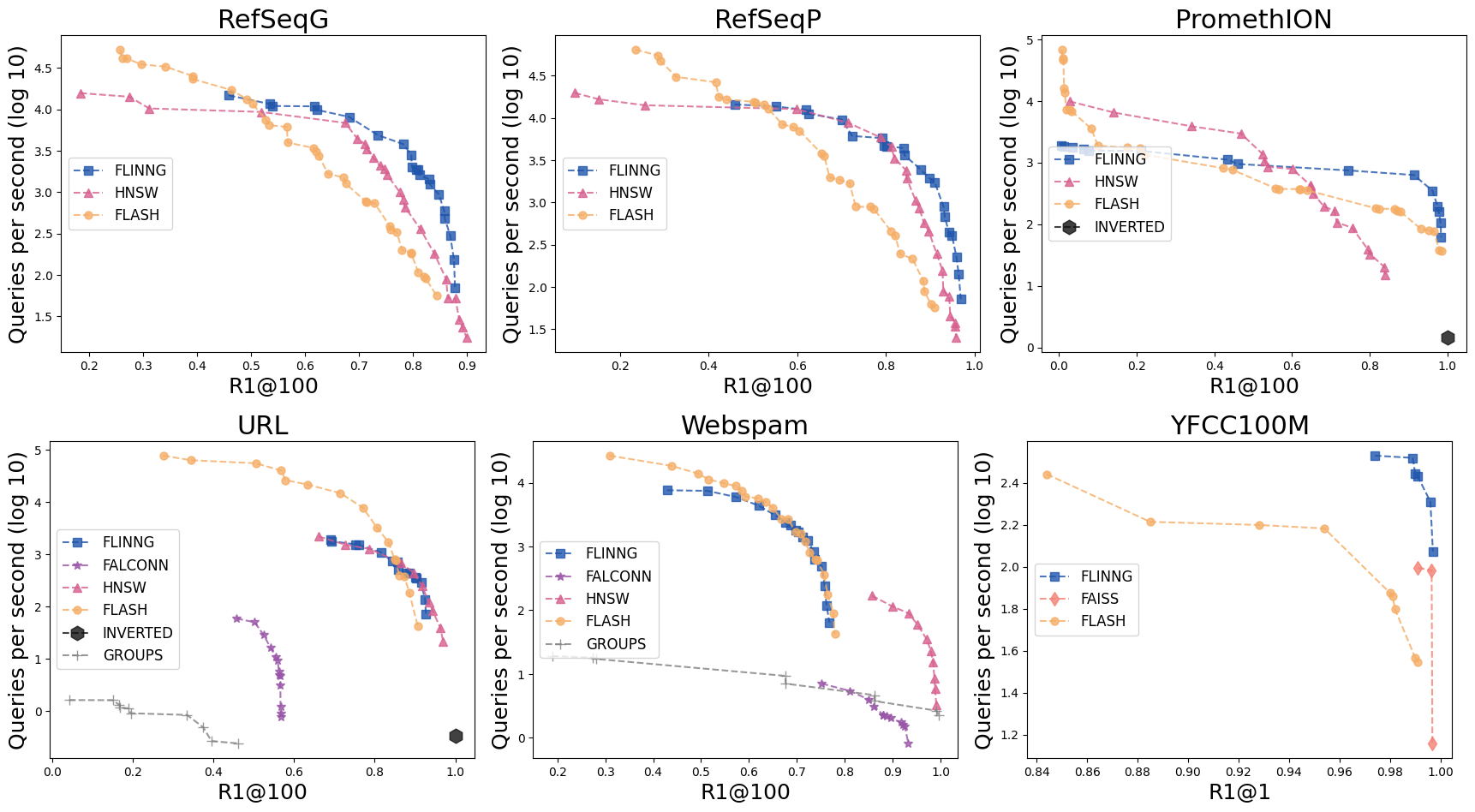}}
\caption{Time recall tradeoff for each dataset (Pareto frontier across hyperparameter configurations). Up and to the right is better. We report $\text{R}1@100$ except for YFCC100M, where we report $\text{R}1@1$.}
\label{fig:trecall}
\end{center}
\vspace{-1cm}
\end{figure*}

\section{Discussion}
\label{sec:discussion}

FLINNG is a theoretically sound algorithm with attractive practical properties that lead to a fast implementation. Our theory shows that FLINNG has sublinear query time when the query has a relatively small number of highly similar neighbors, and our experiments show that FLINNG is efficient for many real-world problems. This is particularly true in genomics, where FLINNG can outperform algorithms like HNSW by a substantial 4x margin. Existing algorithms often require expensive iterative algorithms such as $k$-means clustering or graph construction. In contrast, FLINNG relies on a simple lookup table structure that can be constructed in a single pass. 
% Based on these properties, we expect that FLINNG will be useful for fast indexing and search over high-dimensional datasets. 

We believe that FLINNG could be particularly effective for situations where it is hard to reduce the high-dimensional similarity search to a medium-dimensional problem. In genomics, this problem is difficult because the $k$-mer distribution contains billions of items and often has a heavy-tailed distribution. The situation may also arise for embedding tasks such as YFCC100M, where it is known that dimensionality reduction can hurt performance. In these scenarios, where we prefer to search over the original metric space, FLINNG provides a fast and scalable solution.

\textbf{Limitations and Ethical Considerations:} We do not foresee any ethical problems. The main limitation of our method is that it works best on high-dimensional search tasks where the neighbors are all above a (relatively high) similarity threshold. For this reason, FLINNG is likely not the best choice for problems such as $k$-NN classification, where low-similarity results may be important.

\bibliographystyle{plain}
\bibliography{references}

\appendix
\setcounter{thm}{1}
\setcounter{lemma}{1}
\setcounter{definition}{1}

\section*{Supplementary Materials}

\section{Proofs}

In this section, we provide proofs for all of the theorems introduced in the main text. We begin with a simple extension of the results of \cite{kirsch2006distance} for the Bloom filter false positive and negative rates. Then, we prove our main claim, which is that the query time of our data structure is sublinear, given some relatively weak assumptions on the stability of the query. 

\begin{thm}
\label{thm:distance_sensitive_Bloom}
Assuming the existence of an LSH family with collision probability $s(x,y) = \mathrm{sim}(x,y)$, the distance-sensitive Bloom filter solves the approximate membership query problem with
\begin{equation}
    p \geq 1 - \mathrm{exp}\left(-2m\left(-t/m + S_H^L\right)^{2}\right)
\end{equation}
\begin{equation}
    q \leq \mathrm{exp}\left(-2m\left(-t/m + N S_L^L\right)^{2}\right)
\end{equation}
\end{thm}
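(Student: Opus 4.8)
The plan is to reduce the statement to two applications of Hoeffding's inequality, one for each tail. First I would fix the query $y$ and, for each array $i \in \{1,\dots,m\}$, let $Z_i$ be the indicator that the cell at position $[i,h_i(y)]$ equals $1$; by construction the filter returns true exactly when $S := \sum_{i=1}^m Z_i \ge t$. The key structural observation is that $Z_i$ is a deterministic function of the single hash function $h_i$ — namely $Z_i = \mathbf{1}[\exists x \in D : h_i(x) = h_i(y)]$ — and since the $m$ hash functions are drawn independently, the $Z_i$ are independent Bernoulli variables, so $S$ is a sum of independent $[0,1]$-valued terms. I would also record the one consequence of the concatenation trick that is used: under $h_i$ the collision probability of two fixed points $x,y$ is $\mathrm{sim}(x,y)^L$.

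For the true positive rate, suppose some $x^\star \in D$ has $\mathrm{sim}(x^\star,y) > S_H$. Then $\Pr[Z_i = 1] \ge \Pr[h_i(x^\star) = h_i(y)] = \mathrm{sim}(x^\star,y)^L > S_H^L$, so the averaged mean $\bar\mu := \frac1m \sum_i \mathbb{E}[Z_i]$ satisfies $\bar\mu \ge S_H^L$. Applying Hoeffding to the lower tail of $\frac1m S$ at level $t/m$ gives $\Pr[S \le t] \le \exp(-2m(\bar\mu - t/m)^2)$ whenever $t/m < \bar\mu$, and since in the relevant parameter regime $t/m \le S_H^L \le \bar\mu$, the map $u \mapsto (u - t/m)^2$ is increasing on that interval, so this is at most $\exp(-2m(S_H^L - t/m)^2)$; rearranging yields $p = \Pr[S \ge t] \ge 1 - \exp(-2m(-t/m + S_H^L)^2)$.

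For the false positive rate, suppose no $x \in D$ has $\mathrm{sim}(x,y) > S_L$. A union bound over the dataset gives $\Pr[Z_i = 1] \le \sum_{x\in D} \Pr[h_i(x) = h_i(y)] = \sum_{x\in D}\mathrm{sim}(x,y)^L \le N S_L^L$, so $\bar\nu := \frac1m\sum_i \mathbb{E}[Z_i] \le N S_L^L$. Applying Hoeffding to the upper tail of $\frac1m S$ at level $t/m$ gives $\Pr[S \ge t] \le \exp(-2m(t/m - \bar\nu)^2)$ whenever $t/m > \bar\nu$, and since $t/m \ge N S_L^L \ge \bar\nu$ in the relevant regime, this is at most $\exp(-2m(t/m - N S_L^L)^2) = \exp(-2m(-t/m + N S_L^L)^2)$, the claimed bound on $q$. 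This is exactly the ``straightforward extension'' of Proposition 2.1 of Kirsch--Mitzenmacher alluded to in the text: the only genuine addition over their single-element statement is the union bound accounting for all $N$ inserted points.

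I expect the only real subtlety — and the step I would be most careful to state precisely — is the monotonicity argument: Hoeffding controls the deviation from the true averaged mean $\bar\mu$ (resp.\ $\bar\nu$), not from the extremal value $S_H^L$ (resp.\ $N S_L^L$), so to land on the stated closed form one needs $t/m$ to lie between $N S_L^L$ and $S_H^L$. This is precisely the regime in which both bounds are simultaneously non-trivial, and it is the regime in which the theorem is later invoked (with $S_H = s_{|K|}$ and $S_L = s_{|K|+1}$ for a $\gamma$-stable query), so I would flag this constraint explicitly rather than leave it implicit. Everything else is the routine Hoeffding computation together with bookkeeping of the $1/m$ normalization that distinguishes this appendix statement from the main-text version.
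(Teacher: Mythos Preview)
Your proposal is correct and follows essentially the same approach as the paper: bound the per-array hit probability from below by $\max_{x\in D}\mathrm{sim}(x,y)^L$ (true-positive case) and from above by $\sum_{x\in D}\mathrm{sim}(x,y)^L \le N S_L^L$ (false-positive case, via the union bound), then apply Hoeffding to the sum of the $m$ independent indicators. The paper phrases the concentration step as ``Azuma--Hoeffding'' on a $\mathrm{Binom}(m,p_y)$ variable, but this is the same computation; your explicit flagging of the requirement $N S_L^L \le t/m \le S_H^L$ for the monotonicity step is in fact more careful than the paper's proof, which only makes that constraint explicit later when choosing $t$.
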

\begin{proof} We begin with a brief explanation of the results from \cite{kirsch2006distance}. Recall that a distance-sensitive Bloom filter is a collection of $m$ bit arrays. Array $i$ is indexed using an independent LSH function $l_i(x)$. To insert a point $x$ into the $i$\textsuperscript{th} array, we set the bit at location $l_i(x)$ to `1.' To query the filter, we calculate the $m$ hash values of the query and return ``true'' when at least $t$ of the corresponding bits are `1.'

To bound $p$ (the true positive rate) and $q$ (the false positive rate), we bound the probability that a single array returns ``true.'' Since the arrays are independent, the number of `1's follows a Binomial distribution. \cite{kirsch2006distance} obtain their main result (Theorem 3.1, in their paper) using the Azuma-Hoeffding inequality to bound the tail of the Binomial distribution. This is done for the Hamming metric, using a specially-chosen value of $t$. We repeat their analysis, but for any value of $t$ and in a general metric space.

\textbf{True Positive Rate:} First, we will prove the bound for $p$, the true positive rate. Given a query $y$, let $p_y$ be the probability
$$p_y = \mathrm{Pr}[\mathrm{bit\:at\:} h(y) = 1]$$

From Proposition 2.1 of \cite{kirsch2006distance}, we have that
$$ p_y \geq \max_{x\in D} \mathrm{sim}(x,y)^L > S_H^L$$
Note that $\max \mathrm{sim}(x,y) \geq S_H$ because there is a point $x \in D$ with $\mathrm{sim}(x,y) \geq S_H$, as $y$ is a true positive. Also, note that the $\frac{Nm}{\mathrm{len}}$ term (where $\mathrm{len}$ is the length of the array) is zero in our case, as we do not perform rehashing. Finally, note that we concatenate $L$ hashes so that the LSH function has collision probability $\mathrm{sim}(x,y)^L$.

We use the Azuma-Hoeffding inequality as in \cite{kirsch2006distance}. Let $B(q_y) = \mathrm{Binom}(m,q_y)$ and observe that
$$ t - \mathbb{E}[B(q_y)] = t - mq_y \leq t - mS_H^L$$
\begin{align}
    1 - p &= \mathrm{Pr}[B(q_y) < t]\\
    & = \mathrm{Pr}[B(q_y) - \mathbb{E}[B(q_y)] < t - B(q_y)]\\
    & \leq \mathrm{Pr}[B(q_y) - \mathbb{E}[B(q_y)] < t - mS_H^L] \\
    & \leq \mathrm{exp}\left(-\frac{2}{m}(t - mS_H^L)^2\right)
\end{align}
Finally, we have
$$ p \geq 1 - \mathrm{exp}\left(-2m(-t/m + S_H^L\right)$$

\textbf{False Positive Rate:} The false positive analysis is similar. Here, we have (again, from Proposition 2.1 of \cite{kirsch2006distance}) that
$$ p_y \leq \sum_{x\in D} \mathrm{sim}(x,y)^L \leq N \max_{x\in D}\mathrm{sim}(x,y)^L = N S_L^L$$
We will use the Azuma-Hoeffding inequality again, this time using the fact that
$$ t - \mathbb{E}[B(p_y)] \geq t - mN S_L^L$$
\begin{align}
    q &= \mathrm{Pr}[B(q_y) \geq t]\\
    & = \mathrm{Pr}[B(q_y) - \mathbb{E}[B(q_y)] \geq t - B(q_y)]\\
    & \leq \mathrm{Pr}[B(q_y) - \mathbb{E}[B(q_y)] \geq t - mNS_L^L] \\
    & \leq \mathrm{exp}\left(-\frac{2}{m}(t - mNS_L^L)^2\right)
\end{align}
Leaving us with the desired inequality: 
$$ q \leq \mathrm{exp}\left(-2m(-t/m + S_L^L\right)$$
\end{proof}

% \textit{Proof Sketch:} We defer full proofs to the supplementary materials; what follows is a high-level description of our theory. To obtain theoretical guarantees, we first assume that the tests have a universal fixed false positive and false positive rate. Under this assumption, we derive bounds on the expected query time and error rates of the FLINNG algorithm. To satisfy this assumption, we show how to construct distance-sensitive Bloom filters with a given error rate. Here, the main technical difficulty is to bound the query time necessary to achieve the correct testing error rate. To do this, we require a data-dependent sparsity measure $\gamma$ that is small when the query has only a few close neighbors. To prove our main theorem, we set the test error rates so that the group design solves the randomized nearest neighbor problem.

% We first examine the accuracy of Algorithm \ref{alg:query}: how good the FLINNG index is at returning the near neighbors of a query $x$.
\subsection{Group Testing: Runtime and Accuracy}
The first step of our analysis is to bound the true positive rate and false positive rate of our group testing design. We suppose that the individual tests have a true positive rate $p$ and a false positive rate $q$, and that the tests are independent. Under these assumptions, we have the following result:
\begin{lemma}
\label{lem:overall_tpr_fpr}
  Suppose we have a dataset $D$ of points, where a subset $K\subseteq D$ is ``positive'' and the rest are ``negative.'' Construct a $B\times R$ grid of tests, where each test has i.i.d. true positive rate $p$ and false positive rate $q$. Then Algorithm 2 reports points as ``positive'' with probability: % ~\ref{alg:query}
  \begin{align}
      \mathrm{Pr}[\mathrm{Report}\, x | x \in K] \geq p^R
  \end{align}
% \vspace{-0.8cm}
    \begin{equation}
      \mathrm{Pr}[\mathrm{Report}\, x | x \not \in K] \leq \Bigg[q\left(\frac{eN (B-1)}{B(N-1)}\right)^{|K|} + p\left(1 - \left(\frac{N(B-1)}{eB(N-1)}\right)^{|K|}\right)\Bigg]^R
  \end{equation}
\end{lemma}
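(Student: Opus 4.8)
The statement to prove is Lemma~\ref{lem:overall_tpr_fpr}. I want to analyze, for a fixed point $x \in D$, the probability that Algorithm~\ref{alg:query} reports $x$ in terms of what happens in a single repetition $r$, then raise to the $R$-th power using independence across repetitions. The key observation is that Algorithm~\ref{alg:query} reports $x$ only if, in \emph{every} repetition $r$, the cell $M_{r,b(r)}$ containing $x$ has test result $C_{r,b(r)} = 1$. So if $\pi_r$ denotes the per-repetition ``cell of $x$ tests positive'' event, then $\Pr[\text{Report } x] = \prod_{r=0}^{R-1} \Pr[\pi_r] = \Pr[\pi_0]^R$ by the i.i.d. structure of the random permutations and the i.i.d. test errors.

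\textbf{The true positive case.} When $x \in K$, the cell containing $x$ always contains a positive (namely $x$ itself), so its test is a true positive instance. By the definition of the per-test true positive rate $p$, the test returns $1$ with probability at least $p$ — regardless of whatever else lands in that cell. Hence $\Pr[\pi_0] \ge p$ and $\Pr[\text{Report } x \mid x \in K] \ge p^R$. This direction is essentially immediate.

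\textbf{The false positive case — the main obstacle.} When $x \notin K$, the cell containing $x$ may still test positive two ways: either some \emph{other} positive from $K$ landed in the same cell (so the test sees a true positive and fires with probability governed by $p$), or no positive landed in the cell (so the test is a genuine false positive, firing with probability $q$). Let $A$ be the event that at least one element of $K$ shares $x$'s cell in repetition $0$. Then $\Pr[\pi_0] \le \Pr[A]\cdot p + \Pr[A^c]\cdot q$, after bounding the conditional firing probability by $p$ on $A$ — here I would need to be a little careful that ``at least one positive in the cell'' really does put us in the regime where Definition~\ref{def:approx_set_member} guarantees a \emph{lower} bound $p$ on firing; since the lemma's hypothesis only posits an i.i.d.\ true positive rate $p$ and false positive rate $q$ per test, I will treat $p$ as the worst-case firing probability when a positive is present and $q$ as the firing probability when none is. The remaining work is purely combinatorial: bound $\Pr[A^c]$, the probability that, when $|K|$ positives are thrown uniformly (via random permutation) into $B$ cells, none of them lands in the specific cell holding $x$. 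Conditioning on $x$'s position, each of the $|K|$ positives avoids $x$'s cell; because the permutation makes placements negatively correlated, the cleanest route is to compute the exact probability $\prod_{i} \frac{\text{(slots in other cells)}}{\text{(remaining slots)}}$ and then sandwich it. A convenient bound is $\left(\frac{N(B-1)}{eB(N-1)}\right)^{|K|} \le \Pr[A^c] \le \left(\frac{eN(B-1)}{B(N-1)}\right)^{|K|}$, using $\left(1-\tfrac1k\right)^k \le e^{-1} \le \left(1-\tfrac1k\right)^{k-1}$ style inequalities to convert the falling-factorial ratio into the stated closed form. Substituting $\Pr[A^c] \le (\cdot)$ into the $q$ term and $\Pr[A] = 1 - \Pr[A^c] \le 1 - (\text{lower bound})$ into the $p$ term, then raising to the $R$-th power, yields exactly the displayed bound. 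The one genuinely delicate point is getting the direction of each inequality right — $q$ wants an \emph{upper} bound on $\Pr[A^c]$ while $p$ wants a \emph{lower} bound on it — and making sure the crude $e$-factor slack is applied consistently so the final expression matches the lemma statement.
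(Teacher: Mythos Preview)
Your proposal is correct and follows essentially the same approach as the paper: factor across the $R$ independent repetitions, handle the true-positive case trivially, and for the false-positive case condition on whether $x$'s cell contains some element of $K$, then sandwich the hypergeometric probability $\Pr[A^c]=\binom{N-N/B}{|K|}/\binom{N-1}{|K|}$ between the two $e$-slack expressions. The only cosmetic difference is that the paper bounds the binomial-coefficient ratio directly via $(a/b)^b \le \binom{a}{b} \le (ea/b)^b$, whereas you sketch a falling-factorial product with $(1-1/k)^k$-type inequalities; both routes land on the same bounds.
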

\begin{proof} The procedure returns ``positive'' for an element $x \in D$ only if all of the tests whose group contains $x$ return ``positive.'' Therefore, we will analyze the probability of reporting $x\in D$ as ``positive'' in each repetition (column) of the grid. We obtain the final probabilities using the fact that the repetitions are independent.

\textbf{True Positive Rate:} 
If $x\in K$, then each group containing $x$ returns ``positive'' with the true positive rate $p$. Therefore,
$$\mathrm{Pr}[\mathrm{Report}\, x | x \in K] = p^R $$

\textbf{False Positive Rate:} If $x\not\in K$, then there are two ways for the repetition to accidentally report $x$ as ``positive.'' The first way is for $x$ to fall into a group that contains one of the $|K|$ true positives, and for this test to correctly report ``positive'' (which happens with probability $p$). The second way is for $x$ to fall into a group that exclusively contains negatives, but for the test to malfunction (which occurs with probability $q$). 

Let $p_x$ be the probability that we report $x$ in a single repetition. That is,
$$\mathrm{Pr}[\mathrm{Report}\, x | x \not\in K] = p_x^R $$
$$= p \mathrm{Pr}[x \in \mathrm{positive\,group}] + q\mathrm{Pr}[x \in \mathrm{negative\,group}]$$

The probability that $x$ falls into a negative group is determined by the hypergeometric distribution. Each group contains $N/B$ points\footnote{We suppose that $B$ evenly divides $N$ for simplicity. We may accommodate the general case by replacing $N$ in our inequality with $N+1$, which does not asymptotically change our results.} from the dataset, which we draw from a pool of $N-|K|$ negatives and $|K|$ positives. Since we condition on $x$ being negative, we draw $N/B - 1$ points from $N-1$ possibilities, $|K|$ of which are positive. Therefore, the probability that $x$ falls into a negative group is equal to the probability mass function of the hypergeometric distribution, evaluated at zero.

$$ \mathrm{Pr}[x \in \mathrm{negative\,group}] = \frac{ \binom{|K|}{0}\binom{N-|K|-1}{N/B-1} }{ \binom{N-1}{N/B-1} } $$
Which, after simplification and using Vandermonde's identity:
$$ \mathrm{Pr}[x \in \mathrm{negative\,group}] = \frac{\binom{N-N/B}{|K|}}{\binom{N-1}{|K|}} $$
$$ \mathrm{Pr}[x \in \mathrm{positive\,group}] = 1 - \mathrm{Pr}[x \in \mathrm{negative\,group}]$$

We wish to have an upper bound on both quantities, which amounts to having both an upper and lower bound on $\mathrm{Pr}[x \in \mathrm{negative\,group}]$. We repeatedly apply the following inequalities
$$ \left(\frac{a}{b}\right)^b \leq \binom{a}{b} \leq \left(\frac{ea}{b}\right)^b $$
to get
$$ \frac{\left(\frac{N-N/B}{|K|}\right)^{|K|}}{\left(e\frac{N-1}{|K|}\right)^{|K|}} \leq \frac{\binom{N-N/B}{|K|}}{\binom{N-1}{|K|}} \leq \frac{\left(e\frac{N - N/B}{|K|}\right)^{|K|}}{\left(\frac{N-1}{|K|}\right)^{|K|}} $$
or, after simplification
$$ \left(\frac{N(B-1)}{eB(N-1)}\right)^{|K|} \leq \frac{\binom{N-N/B}{|K|}}{\binom{N-1}{|K|}} \leq \left(\frac{eN(B-1)}{B(N-1)}\right)^{|K|} $$
This yields the inequalities
$$ \mathrm{Pr}[x \in \mathrm{negative\,group}] \leq \left(\frac{eN(B-1)}{B(N-1)}\right)^{|K|}$$
$$ \mathrm{Pr}[x \in \mathrm{positive\,group}] \leq 1 - \left(\frac{N(B-1)}{eB(N-1)}\right)^{|K|}$$
which, when substituted into the expression for $p_x^R$, proves the theorem.
\end{proof}

We next bound the runtime of our method.
\begin{thm}
\label{thm:group_test_query_time}
  Under the assumptions in Lemma~\ref{lem:overall_tpr_fpr}, suppose that each test runs in time $O(T)$. Then with probability $1 - \delta$
  
\begin{equation}
    t_{\text{query}} = O\left(BRT + \frac{RN}{B}\left(p|K| + qB\right) \log(1/\delta)\log N\right)
\end{equation}  
%   the (expected) query time $t_{\text{query}}$ of Algorithm 2 is: %~\ref{alg:query}
%   \begin{align}
%       \mathbb{E}[t_{\text{query}}] = O\left(BRT + \frac{RN}{B}\left(p|K| + qB\right)\right)
%   \end{align}
\end{thm}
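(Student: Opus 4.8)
\textit{Proof Sketch:} The plan is to split the query time of Algorithm~\ref{alg:query} into two pieces: the cost of evaluating the $BR$ group tests, and the cost of the set operations that assemble the candidate sets and shrink $\hat K$. The first piece is immediate: there are $B$ cells in each of the $R$ columns and each test costs $O(T)$, for a total of $O(BRT)$. The second piece is where the work lies. In repetition $r$, Algorithm~\ref{alg:query} builds $Y$ as the union of the membership sets $M_{r,b}$ over the cells with $C_{r,b}(y)=1$, and then replaces $\hat K$ by $\hat K\cap Y$. Since the cells in a column partition $D$ into blocks of $N/B$ points, forming $Y$ touches exactly $\tfrac{N}{B}n_r$ point identifiers, where $n_r$ is the number of cells that fire in column $r$; and the subsequent intersection costs $O(\min(|\hat K|,|Y|))=O(\tfrac{N}{B}n_r)$, so it never dominates building $Y$. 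Maintaining $\hat K$ and deduplicating $Y$ with an ordered-set (or balanced-tree) representation adds an $O(\log N)$ factor per element, so the total set-operation cost is $O\!\left(\tfrac{N}{B}\log N\sum_{r=1}^{R}n_r\right)$. Everything now reduces to a high-probability bound on $\sum_r n_r$.

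Next I would bound $\mathbb{E}[n_r]$. A cell fires with probability $p$ if it contains at least one of the $|K|$ positives and with probability $q$ if it contains only negatives. The number of positive-containing cells in a column is at most $|K|$ and the number of negative-only cells is at most $B$, so by linearity $\mathbb{E}[n_r]\le p|K|+qB$, hence $\mathbb{E}\!\left[\sum_r n_r\right]\le R(p|K|+qB)$. This reuses exactly the hypergeometric picture from the proof of Lemma~\ref{lem:overall_tpr_fpr}, except that now we only need an upper bound on a count rather than a two-sided bound on a probability.

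To turn the expectation into a $1-\delta$ statement I would condition on the random partitions used in all $R$ columns. Given the partitions, the cell outcomes are independent (the tests are i.i.d. by assumption), so $\sum_r n_r$ is a sum of at most $RB$ independent Bernoulli variables whose mean is at most $R(p|K|+qB)$ \emph{regardless} of the partition. A multiplicative Chernoff bound then gives $\sum_r n_r=O\!\big(R(p|K|+qB)+\log(1/\delta)\big)=O\!\big(R(p|K|+qB)\log(1/\delta)\big)$ with probability at least $1-\delta$; since this tail bound is uniform over partitions, we may integrate them out. Combining with the two cost terms above yields
\begin{equation*}
t_{\text{query}}=O\!\left(BRT+\frac{RN}{B}\big(p|K|+qB\big)\log(1/\delta)\log N\right),
\end{equation*}
as claimed.

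The main obstacle is the concentration step, and in particular making the independence argument airtight: the random permutation $\pi$ can place several positives into the same cell, which correlates the ``this positive is detected'' indicators, so it is essential to condition on the full partition first (after which the per-cell firings, not the per-positive detections, are the independent objects) and to verify that the conditional mean is bounded by $R(p|K|+qB)$ uniformly in the partition. A secondary point to check carefully is the set-operation accounting—one must confirm that intersecting against the shrinking $\hat K$ is never more expensive than rebuilding $Y$, so that the $n_r$-based bound truly captures the whole per-repetition cost.
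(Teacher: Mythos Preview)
Your proposal is correct and follows essentially the same route as the paper: decompose into the $O(BRT)$ testing cost plus a set-operation cost proportional to $\tfrac{N}{B}\sum_r n_r$ (with a $\log N$ overhead), bound $\mathbb{E}[n_r]\le p|K|+qB$, and apply a multiplicative Chernoff bound. Your treatment of the concentration step is in fact slightly cleaner than the paper's---you condition on the random partitions and apply Chernoff to the full sum $\sum_{r=1}^R n_r$ of independent Bernoullis, whereas the paper bounds each $n_r$ separately (implicitly requiring a union bound over the $R$ columns that it does not spell out).
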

\begin{proof}
We must query each group test, and then intersect all the candidate groups. The $BRT$ term is the cost of querying all $B\times R$ cells. To obtain the cost of intersecting the $R$ candidate sets, let $P_i$ be the candidate set of the $i$\textsuperscript{th} repetition and let $c_{i}$ be the cost of the $i$\textsuperscript{th} intersection, where $i = \{1,2,...,R\}$. The total cost is
$$ c_{\text{total}} = \sum_{i = 1}^{R-1} O\left(|P_{i+1}| + \left|\bigcap_{j=1}^i P_{j}\right|\right) + O\left(|P_{i=1}| \log |P_{i=1}|  + \left|\bigcap_{j=1}^i P_{j}\right| \log \left|\bigcap_{j=1}^i P_{j}\right|\right)$$
because the cost to intersect two sets of sorted integers is the sum of set cardinalities, and we pay an $O(|P|\log |P|)$ cost to sort a list of size $|P|$. Also, note that 

$$\left|\bigcap_{j=1}^i P_{j}\right| \leq |P_i| $$
because $|A\cap B|\leq \min\{|A|,|B|\}\leq |A|$. Therefore, we have
$$ c_{\text{total}} = \sum_{i = 1}^{R-1} |P_{i}| +|P_{i+1}| + O(N\log N) = O\left(RN \log N + \sum_{i = 1}^R|P_i|\right)$$
Because each group has exactly $\frac{N}{B}$ points, the value of $|P_i|$ is $\sum_{j = 1}^B \frac{N}{B}\mathbbm{1}_{(i,j)}$, where the indicator function $\mathbbm{1}_{(i,j)} = 1$ if the group test in row $j$ and column $i$ outputs ``positive.'' Under the mild assumption that $p > q$ (i.e. the true positive rate is larger than the false positive rate), this sum is maximized when all $|K|$ true positives are assigned to different groups. The expected value of this sum is
$$ \mu = \mathbb{E}[|P_i|] \leq p|K| + q(B - |K|) \leq p|K| + qB$$

We want to bound this sum in probability:

$$ \mathrm{Pr}\left[\sum_{j=1}^B \mathbbm{1}_{(i,j)} \geq (1+\Delta) \mu\right]$$

We use the simplified Chernoff bound for independent non-identical Bernoulli sums: 
$$ \mathrm{Pr}\left[\sum_{j=1}^B \mathbbm{1}_{(i,j)} \geq (1+\Delta) \mu\right] \leq e^{-\frac{\Delta^2\mu}{2+\Delta}}$$

We wish to find the value of $\Delta$ which makes this probability smaller than the failure rate.
$$ e^{-\frac{\Delta^2\mu}{2+\Delta}} < \delta$$
$$ \frac{\Delta^2 \mu}{2+\Delta} \geq \log 1/\delta $$

In our context, $\mu > 1$ (otherwise, it is trivial to bound $|P_i|$) and we may constrain $\Delta > 1$. This yields the inequality
$$ \frac{\Delta^2 \mu}{2+\Delta} \geq \frac{\Delta^2}{3\Delta}\log 1/\delta $$
Therefore, we may set $\Delta = 3\log 1/\delta$ to get the following statement with probability $1 - \delta$
$$\sum_{j=1}^B \mathbbm{1}_{(i,j)} < (1+3\log(1/\delta))\left(p|K| + qB \right) $$

This bound on the intersection cost proves the theorem.

\end{proof}

\subsection{Bounding the Test Cost}
In this section, we bound the runtime and error characteristics of a specific binary classifier: a distance-sensitive Bloom filter. To distinguish between the $K$ nearest neighbors and the rest of the dataset, we apply Theorem~\ref{thm:distance_sensitive_Bloom} to a group with $\frac{N}{B}$ points, $S_H = \mathrm{sim}(x_{|K|},y) = s_{|K|}$, and $S_L = \mathrm{sim}(x_{|K|+1},y) = s_{|K|+1}$, where $x_{|K|}$ is the $K$ nearest neighbor to the query $y$. This gives us the following bounds on $q$ and $p$:
\begin{align}
     p &\ge 1 - exp\left(-2m\left(-\frac{t}{m} + \left(s_{|K|}\right)^L\right)^2\right) \label{eq:p_lower} \\
     q &\le \exp\left(-2m\left(- \frac{t}{m} + \frac{N}{B}\left(s_{|K| + 1}\right)^L\right)^2\right) \label{eq:q_upper} 
\end{align} 
These bounds have four design parameters that we may freely choose: $t$, the threshold number of collisions we require to report a ``positive''; $m$, the number of bit arrays in the Bloom filter; $L$, the number of hash values we concatenate together in each array; and $B$, the number of cells into which the dataset is partitioned within each column. In this section, we seek to find specific values for these free parameters that will allow us to build a distance sensitive Bloom filter with sufficiently high $p$ and low $q$. We will use these values in the proof of Theorem~\ref{thm:bound_test_query_time} and for the rest of our analysis. 

\textbf{A note about the hashing cost:} The cost to query each filter is the cost of performing $L\times m$ LSH computations. However, the LSH computations are not $O(1)$, they are $O(d)$, where $d$ is the dimensionality of the dataset. Since this simply adds a constant multiplier term of $d$ to the asymptotic expressions, we not not include the dependency on $d$ in our analysis.

\textbf{Choosing a Value For $t$:} In the bounds given in Equation~\eqref{eq:p_lower} and Equation~\eqref{eq:q_upper}, the inner expression with $-\frac{t}{m}$ is squared, which gives the initial impression that any value of $t$ will work. However, when we derived these bounds in Theorem~\ref{thm:distance_sensitive_Bloom}, we implicitly require that the threshold ratio (here $\frac{t}{m}$) be smaller than the ``positive'' Bloom filter collision probability $S_H^L$ and larger than the ``negative'' Bloom filter collision probability $NS_L^L$. Making the same substitutions, we find that $\frac{t}{m}$ must satisfy the following condition.
\begin{equation}
    \frac{N}{B}(s_{|K| + 1})^L < \frac{t}{m} < (s_{|K|})^L \label{eq:t_bound}
\end{equation}
The intuition behind this condition is that $t$ must be a threshold number of collisions that lies between the expected number of collisions in a positive group and the expected number of collisions in a negative one.

Any $\frac{t}{m}$ in this range gives us \textit{some} valid bound on $p$ and $q$, but values very close to the edges of the range are suboptimal. Rather than find an optimal value of $t$, which is likely difficult and data-dependent, we choose a specific value for $\frac{t}{m}$ that works well: the average between the lower bound and the upper bound in Equation~\eqref{eq:t_bound}. This gives us the following value for $t$:
\begin{equation}
    t = m\left(\frac{\frac{N}{B}(s_{|K| + 1})^L + (s_{|K|})^L}{2}\right) \label{eq:t_def}
\end{equation}
One benefit of choosing this value for $t$ is that the bounds on $p$ and $q$ from Equation~\eqref{eq:p_lower} and Equations~\eqref{eq:q_upper} now look the same. After substituting $t$ into the bounds, we define a new variable $\alpha$ as
\begin{equation}
    \alpha = \exp\left(-2m\left(\frac{(s_{|K|})^L - \frac{N}{B}(s_{|K| + 1})^L}{2}\right)^2\right) \label{eq:a_def}
\end{equation}
Note that
\begin{equation}
    q \le \alpha \text{ and } p \ge 1 - \alpha \label{eq:pqa}
\end{equation}
$\alpha$ simplifies the analysis because it represents the bounds on $p$ and $q$ at the same time. If we decrease $\alpha$, we have a larger $p$ and a smaller $q$ (i.e. a more accurate test). 

\textbf{Choosing a Value For $B$ and $L$:} To further simplify the analysis, we wish to decouple $\alpha$ from $N$ in Equation~\eqref{eq:a_def}.
Our goal in this section is to choose $B$ and $L$ as a function of $N$ so that the error rate $\alpha$ no longer depends on $N$.
As with $t$, our choices are not necessarily optimal. We use them because they allow us to prove theoretical guarantees about the system.

We first let
\begin{equation}
    B = 2\sqrt{N} \label{eq:b_def}
\end{equation}
To have sublinear query time, $B$ must be proportional to some fractional power of  $N$ because the query time in  Theorem~\ref{thm:group_test_query_time} contains both $\frac{N}{B}$ and $B$ factors. The use of $N^{\frac{1}{2}}$ minimizes the complexity of their sum, and the constant factor of $2$ is chosen to simplify the analysis in the next paragraph.

We next let $L$ be the smallest positive integer such that $(s_{|K|})^L \geq 2\frac{N}{B}(s_{|K| + 1})^L$. Since $s_{|K|} \geq s_{|K|+1}$, it is always possible to find such an integer (even though this integer may be impractically large). This choice of $L$ simplifies the difference in the squared term in $\alpha$ to $\frac{(s_{|K|})^L}{2}$. In particular, $\alpha$ no longer depends on $N$. The analysis now depends on $N$ exclusively through the parameter $L$. 

If we start with $\frac{N}{B}(s_{K + 1})^L = \frac{(s_{K})^L}{2}$ and solve for L, as well as plug in our expression for $B$ from Equation~\eqref{eq:b_def}, we get the following expression for L.

\begin{equation}
L = \frac{\frac{1}{2}\log(N)}{\log(s_{|K|}) - \log(s_{|K| + 1})} \label{eq:l_def}
\end{equation}

Finally, we can plug this value of $L$ into Equation~\eqref{eq:a_def} and simplify, which gives us the following value for $\alpha$.
\begin{equation}
    \alpha = \exp\left(\frac{-m(s_{|K|})^{\frac{\log(N)}{\log (s_{|K|}) - \log(s_{|K| + 1})}}}{8}\right) \label{eq:a_2}
\end{equation}
We will use the following well-known fact: for any non-negative and nonzero real numbers $a, b, c$,
\begin{equation}
    a^{\frac{\log(b)}{c}} = b^{\frac{\log(a)}{c}} \label{eq:clever}
\end{equation}
We apply Equation~\eqref{eq:clever} to Equation~\eqref{eq:a_2}, with $a = s_{|K|}$, $b = N$, and $c = \log (s_{|K|}) - \log(s_{|K| + 1})$ and simplify the result to obtain
\begin{equation}
    \alpha = \exp\left(\frac{-mN^{\frac{\log(s_{K})}{\log (s_{|K|}) - \log(s_{|K| + 1})}}}{8}\right) \label{eq:a_3}
\end{equation}

\textbf{The $\gamma$-Stable Query Condition:}

As observed by \cite{kirsch2006distance}, it is not possible to design a distance-sensitive Bloom filter with arbitrary $p$ and $q$ without additional conditions on the query. Therefore, we introduce the requirement that the query be $\gamma$-\textit{stable}. That is, we require that the query not be a pathologically difficult query for a distance-sensitive Bloom filter to answer\footnote{This parameter functions similarly to $\rho$ from the standard LSH near neighbor analysis.}. When the $\gamma$ is small, the query has $|K|$ neighbours with a clear distinction between non-neighboring points (i.e. $s_{|K|+1}\ll s_{|K|}$). When $|K|$ is large, there is no such distinction, and the neighboring points are approximately as close as the neighbors. Unstable queries are both rare and non-informative in the near neighbor setting - for further discussion, see the seminal paper of \cite{beyer1999nearest}. We formally define a $\gamma$-stable query as:

\begin{definition} $\gamma$\textbf{-stable Query:} We say that a query is $\gamma$-stable if
\label{def:gamma}
\begin{equation}
    \frac{\log(s_{|K|})}{\log(s_{|K| + 1}) - \log(s_{|K|})} \leq \gamma
\end{equation}
\end{definition}

First we note that similarity is a measure from $0$ to $1$, so the numerator of $\gamma$ is negative. Furthermore, since $s_{|K| + 1} < s_{|K|}$ and the $\log$ function is monotonically increasing, $s_{|K| + 1} - s_{|K|} < 0$ and the denominator is negative. Thus, $\gamma$ is positive (a negative number over a negative number). Indeed, if $\gamma$ is small then the query is stable because $\gamma$ is small when the similarity of $x$ with the $|K|$\textsuperscript{th} nearest neighbor is large (the numerator is a small negative number) and when there is a large similarity gap between the $|K|$\textsuperscript{th} and $|K + 1|$\textsuperscript{th} neighbors (the denominator is a large negative number). 

Our $\gamma$-stable parameterization has the added benefit of removing the similarity values $s_{|K|}$ and $s_{|K| + 1}$ from Equation~\eqref{eq:a_3}. Substituting the query parameterization from Definition~\ref{def:gamma} to Equation~\eqref{eq:a_3}, we have the following upper bound for $\alpha$. This bound holds for all $\gamma$-stable queries:
\begin{equation}
    \alpha \le \exp\left(\frac{-mN^{-\gamma}}{8}\right)  \label{eq:a_final}
\end{equation}

The reason Equation~\eqref{eq:a_final} is an upper bound is because Definition~\ref{def:gamma} guarantees that $-\gamma$ is a \textit{larger in magnitude} negative number than the negative exponent of $N$ in Equation~\eqref{eq:a_3}. Thus, when we raise $N$ to these exponents, we have that 
\begin{equation}
     N^{-\gamma} < N^{\frac{\log(s_{|K|})}{\log(s_{|K|}) - \log(s_{|K|+1})}}
\end{equation}

Thus, our use of $\gamma$ results in a \textit{smaller} (in magnitude) negative number inside the exponential in Equation~\eqref{eq:a_3}, and thus a larger $\alpha$ overall. Note that this somewhat loosens our bounds for $p$ and $q$ (since they are in terms of $\alpha$).

Given these parameter choices, we are ready to state our theorem which bounds the query time of each distance-sensitive Bloom filter. We use $p'$ and $q'$ to denote the desired true positive rate and false positive rate of the Bloom filter, to differentiate from $p$ and $q$ above.

\begin{thm}
\label{thm:bound_test_query_time}
Given a true positive rate $p'$, false positive rate $q'$ and stability parameter $\gamma$, it is possible to choose $m$, $L$ and $t$ so that the resulting distance-sensitive Bloom filter has true positive rate $p'$ and false positive rate $q'$ for all $\gamma$-stable queries. The query time is
\begin{align}
O(mL) = O\left(-\log(\min(q', 1 - p'))N^{\gamma}\log(N)\right)
\end{align}
\end{thm}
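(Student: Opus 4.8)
The plan is to assemble the query-time bound $O(mL)$ directly from the parameter choices already fixed in Section~\ref{sec:bounding_test_cost}, and then substitute the required accuracy into the formula for $m$. Recall that the query cost of one distance-sensitive Bloom filter is the cost of $m$ arrays, each requiring $L$ concatenated LSH evaluations, so the per-filter cost is $O(mL)$ (dropping the $O(d)$ hashing factor as discussed). The values of $B$ and $L$ are already pinned down by Equations~\eqref{eq:b_def} and~\eqref{eq:l_def}: we have $B = 2\sqrt N$ and $L$ is (up to rounding to the nearest integer) equal to $\tfrac{1}{2}\log(N)/(\log s_{|K|} - \log s_{|K|+1})$. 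The only remaining free parameter is $m$, and it is determined by the demand that the test achieve true positive rate $p'$ and false positive rate $q'$.

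First I would solve for $m$. From Equation~\eqref{eq:pqa} we have $q \le \alpha$ and $p \ge 1-\alpha$, so it suffices to force $\alpha \le \min(q', 1-p')$. Using the $\gamma$-stable bound on $\alpha$ from Equation~\eqref{eq:a_final}, namely $\alpha \le \exp(-mN^{-\gamma}/8)$, it is enough to choose
\begin{equation}
    m = \left\lceil \frac{8 N^{\gamma}}{1} \log\!\left(\frac{1}{\min(q', 1-p')}\right) \right\rceil = O\!\left(-\log(\min(q',1-p'))\, N^{\gamma}\right),
\end{equation}
where the inequality $-\log(\min(q',1-p')) > 0$ holds since $q', 1-p' < 1$. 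With this $m$, the test indeed meets the required error rates for every $\gamma$-stable query, because the chain $\alpha \le \exp(-mN^{-\gamma}/8) \le \min(q',1-p')$ holds by construction and then $q \le \alpha \le q'$ and $p \ge 1-\alpha \ge p'$.

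Then I would multiply: $O(mL) = O\big(-\log(\min(q',1-p'))N^{\gamma}\big) \cdot O\big(L\big)$. To finish I need a clean bound $L = O(\log N)$. This is the one place that requires a small argument rather than pure substitution: from Equation~\eqref{eq:l_def}, $L = \tfrac12 \log(N)/(\log s_{|K|} - \log s_{|K|+1})$, and the denominator $\log s_{|K|} - \log s_{|K|+1}$ is a fixed positive constant for the given query (it does not depend on $N$), so $L = O(\log N)$ with the hidden constant depending on the query's similarity gap. Combining, $O(mL) = O(-\log(\min(q',1-p'))N^{\gamma}\log N)$, which is the claimed bound. I expect the main (minor) obstacle to be the bookkeeping around the integer rounding of $L$ and $m$ and making explicit that the query-dependent constant $1/(\log s_{|K|} - \log s_{|K|+1})$ is absorbed into the $\gamma$ notation — indeed the whole point of Definition~\ref{def:gamma} is that $N^{\gamma}$ already dominates $N$ raised to the query's true exponent, so loosening to $N^\gamma$ keeps the statement uniform over all $\gamma$-stable queries at the cost of a looser (but still correct) bound.
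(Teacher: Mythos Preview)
Your proposal is correct and follows essentially the same route as the paper: fix $t$, $B$, $L$ as in Section~\ref{sec:bounding_test_cost}, solve $\exp(-mN^{-\gamma}/8) = \min(q',1-p')$ for $m$, and multiply by $L = O(\log N)$. The one point worth tightening is your justification that the factor $1/(\log s_{|K|} - \log s_{|K|+1})$ is ``absorbed into the $\gamma$ notation'': the $\gamma$-stability condition does not by itself bound this gap, and the paper handles uniformity over all $\gamma$-stable queries by taking $L$ as the maximum over such queries under an additional mild assumption that the ratio $s_{|K|}/s_{|K|+1}$ is bounded (see the footnote in the paper's proof).
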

\begin{proof}

From Equation~\eqref{eq:pqa}, we have $q \le \alpha$ and $p \ge 1 - \alpha$. To guarantee that the actual error rates of the filter meet the design requirements (i.e. $q \le q'$ and $p \ge p'$), we must choose $\alpha$ small enough such that $q' \le \alpha$ and $p' \ge 1 - \alpha$. Thus we need
\begin{equation}
     \alpha \le \min(q', 1 - p') \label{eq:a_nec}
\end{equation}
For $\gamma$-stable queries, we may bound $\alpha$ using Equation~\eqref{eq:a_final}. Because it is more expensive to design a filter with small $\alpha$, we wish to use the largest possible value of $\alpha$ that will work. This value is attained when the upper bound in Equation~\eqref{eq:a_final} is equal to the upper bound in Equation~\eqref{eq:a_nec}. This gives us the following condition for the Bloom filter to have the desired error characteristics:
\begin{equation}
    \exp\left(\frac{-mN^{-\gamma}}{8}\right) = \min(q, 1 - p)
\end{equation}
Simplifying, we have the following expression for $m$. Note that $m$ is positive because $\log(\min(q,1-p)) < 0$. 
\begin{equation}
    m = -8\log(\min(q, 1 - p))N^\gamma \label{eq:m_def}
\end{equation}
We now have concrete values for all of our free variables: $t$ from Equation~\eqref{eq:t_def}, $B$ from Equation~\eqref{eq:b_def}, $L$ from Equation~\eqref{eq:l_def}, and $m$ from Equation~\eqref{eq:m_def}. Note that $L$ should be chosen as the maximum over all $s_{|K|}$ and $s_{|K|+1}$ that are $\gamma$-stable\footnote{This requires a mild assumption that the ratio $\frac{s_{|K|}}{s_{|K|+1}}$ is bounded.}, so that Equation~\eqref{eq:l_def} is true for \textit{all} $\gamma$-stable queries.

When $t$,$B$,$L$, and $m$ are chosen in this way, the resulting filter has true positive rate $p \leq p'$ and false positive rate $q \leq q'$ for all $\gamma$-stable queries.

To query a distance-sensitive Bloom filter, we must compute $m\times L$ hash functions and perform $m$ array lookups for an overall complexity of $O(mL)$. Thus, the time to query our Bloom filter is
\begin{align*}
    O(mL) &= O\left(\log(\min(q, 1 - p))N^\gamma  \log(N)\right)
\end{align*}
\end{proof}

% Algorithm \ref{alg:index} takes any $B$ and $R$, but to simplify our analysis we let $B = 2\sqrt{N}$ and $R = w\log(N)$ for $w \ge 1$. Note that these exact values may not be integers, but we could take the ceiling and our asymptotic analysis would not greatly change. 

% We also define a query specific sparsity measure
% $$\gamma = \frac{\log \left( s_{|K|}\right)}{\log \left(s_{|K + 1|}\right) - \log \left( s_{|K|}\right)}$$
% Similarity is a measure from $0$ to $1$, so since $s_{|K+1|} < s_{|K|}$, $\gamma$ is positive (a negative number over a negative number). Intuitively, if $\gamma$ is small then the query is easier, since $\gamma$ shrinks proportional to the similarity of $x$ with the $|K|$th nearest neighbor (the farthest neighbor we want to return) and grows proportional to how close in similarity the $|K|$th and $|K + 1|$th neighbor (the closest neighbor we do not want to return) are. We are now ready to state our result for $T$, which actually ends up being a result about distance-sensitize Bloom filters.

\subsection{Query Time Analysis}

In this section, we combine previous results to solve the randomized nearest neighbor problem. First, we consider the query time of a $2\sqrt{N} \times R$ grid of Bloom filter classifiers. Lemma~\ref{lem:group_distance_sensitive_query_time} is a straightforward application of Theorem~\ref{thm:group_test_query_time} to the group test design from Theorem~\ref{thm:bound_test_query_time}. 

% We first examine the query time of a $B \times R$ grid that has distance-sensitive Bloom filters as classifiers. The query time result in Theorem \ref{thm:group_test_query_time} applies to this grid, but now that we have a specific classifier, we can substitute in an explicit equation for $T$ using Theorem \ref{thm:bound_test_query_time}. As in section $5.2$, we take $B = 2\sqrt{N}$ and assume $B$ divides $N$. 

\begin{lemma}
\label{lem:group_distance_sensitive_query_time}
Under the assumptions in Lemma \ref{lem:overall_tpr_fpr}, we can use distance-sensitive Bloom filters as tests to achieve the following query time $t_{query}$ of Algorithm $2$ with probability $1 - \delta$
\begin{align}
\begin{split}
t_{query} = O(&RN^{\frac{1}{2} + \gamma}\log(N)\max (-\log(q), -\log(1 - p))\\ + &RN^\frac{1}{2}\log^2(N)(|K| + qN^\frac{1}{2})\log(1/\delta))
\end{split}
\end{align}
\end{lemma}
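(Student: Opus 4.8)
The plan is to derive Lemma~\ref{lem:group_distance_sensitive_query_time} by plugging the Bloom filter test parameters into the generic group testing runtime bound of Theorem~\ref{thm:group_test_query_time}. Recall that theorem gives $t_{\text{query}} = O\left(BRT + \frac{RN}{B}\left(p|K| + qB\right)\log(1/\delta)\log N\right)$, where $T$ is the per-test cost. I would substitute three ingredients: the design choice $B = 2\sqrt{N}$ from Equation~\eqref{eq:b_def}; the per-test cost $T = O(mL) = O\left(-\log(\min(q, 1-p))N^\gamma \log N\right)$ from Theorem~\ref{thm:bound_test_query_time}; and the observation $p \le 1$.

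For the first term, $BRT = O\left(\sqrt{N} \cdot R \cdot (-\log(\min(q,1-p)))N^\gamma \log N\right) = O\left(RN^{1/2+\gamma}\log(N)\max(-\log q, -\log(1-p))\right)$, using that $-\log(\min(q,1-p)) = \max(-\log q, -\log(1-p))$. For the second term, $\frac{RN}{B} = \frac{RN}{2\sqrt{N}} = O(R\sqrt{N})$, and the bracket $p|K| + qB = O(|K| + q\sqrt{N})$ since $p \le 1$ and $B = 2\sqrt{N}$. Multiplying through with the $\log(1/\delta)\log N$ factor gives $O\left(RN^{1/2}\log N \,(|K| + q\sqrt{N})\log(1/\delta)\right)$. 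The only mismatch with the claimed bound is a single extra factor of $\log N$: the statement has $\log^2(N)$ in the second term where my direct substitution gives $\log N$. I would attribute this to the cost of sorting candidate lists inside the intersection step (the $O(N\log N)$ sorting cost that appears inside the proof of Theorem~\ref{thm:group_test_query_time}), or simply absorb it as a conservative rounding; either way one verifies it by tracing the $\log$ factors through the chain Theorem~\ref{thm:group_test_query_time} $\to$ current lemma. Combining the two pieces yields exactly the stated expression.

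The main obstacle is bookkeeping rather than any genuine difficulty: one must be careful that Theorem~\ref{thm:bound_test_query_time} was stated with primed rates $p', q'$ denoting \emph{design targets}, while Lemma~\ref{lem:overall_tpr_fpr} and Theorem~\ref{thm:group_test_query_time} use $p, q$ as the \emph{actual} i.i.d. test rates; I would note that since the constructed filter achieves $p \ge p'$ and $q \le q'$, we may simply identify the target rates with the achieved rates and write everything in terms of $p, q$. A secondary subtlety is that $L$ (hence $m$ and $T$) must be chosen uniformly over all $\gamma$-stable queries, as flagged in the footnote of Theorem~\ref{thm:bound_test_query_time}; this requires the mild boundedness assumption on $s_{|K|}/s_{|K|+1}$, and under it the per-test cost bound $T = O\left(-\log(\min(q,1-p))N^\gamma\log N\right)$ holds for every query in the class, which is what the substitution needs. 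With these identifications in place, the lemma follows by the arithmetic above.
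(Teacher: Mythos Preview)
Your proposal is correct and follows essentially the same approach as the paper: substitute $B = 2\sqrt{N}$ and the per-test cost $T$ from Theorem~\ref{thm:bound_test_query_time} into Theorem~\ref{thm:group_test_query_time}, use $p \le 1$, and rewrite $-\log(\min(q,1-p))$ as $\max(-\log q, -\log(1-p))$. In fact you are more careful than the paper on two points: the paper simply writes $\log^2(N)$ in the second term without comment (your direct substitution giving a single $\log N$ is what Theorem~\ref{thm:group_test_query_time} literally yields, so the extra factor is just slack), and the paper does not explicitly discuss the $p'$ vs.\ $p$ identification or the uniform choice of $L$ over $\gamma$-stable queries that you flag.
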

\begin{proof}
    
Each cell is a distance-sensitive Bloom filter, so the test query time $T$ is equal to our result for the query time of a distance-sensitive Bloom filter from Theorem~\ref{thm:bound_test_query_time}.
    
We use $B = 2\sqrt{N}$ from Equation~\eqref{eq:b_def} and $O(T) = O\left(\log(\min(q, 1 - p))N^\gamma \log(N)\right)$ from Theorem~\ref{thm:bound_test_query_time} with the query time expression from Theorem~\ref{thm:group_test_query_time} to get 
\begin{align}
\begin{split}
    t_{query} = O(&-RN^{\frac{1}{2} + \gamma}\log(N)\log (\min(q, 1 - p))\\ + &RN^\frac{1}{2}\log^2(N)(p|K| + qN^\frac{1}{2})\log(1/\delta)) \label{eq:expected_initial}
\end{split}
\end{align}
Since $p < 1$, we may replace it with $1$. Also, since $0 < p < 1$ and $0 < q < 1$,
\begin{align}
    -\log(\min(p , q)) = \max(-\log(p), -\log(q))
\end{align}
Making these two substitutions into Equation~\eqref{eq:expected_initial}, we have
\begin{align}
\begin{split}
t_{query}= O(&RN^{\frac{1}{2} + \gamma}\log(N)\max (-\log(q), -\log(1 - p))\\ + &RN^\frac{1}{2}\log^2(N)(|K| + qN^\frac{1}{2})\log(1/\delta))
\end{split}
\end{align}
\end{proof}

Our bound on the query time has two free parameters: $p$ and $q$. We will show that, given an appropriate choice for $p$ and $q$, our algorithm solves the nearest neighbor problem (i.e. $|K| = 1$) in sublinear time. We begin by finding the values of $p$ and $q$ which solve the nearest neighbor problem in Lemma~\ref{lem:cellwise_rate_bounds}. 

\begin{lemma}
\label{lem:cellwise_rate_bounds}
Under the conditions in Lemma \ref{lem:overall_tpr_fpr}, our data structure solves the randomized nearest neighbor problem for sufficiently large $N$ and small $\delta$, where\footnote{We require $N \ge 150$ and $\delta$ small enough that $R \ge 10\log N$}
\begin{align}
    p &= 1 - \frac{\delta}{2R} \qquad q = N^{-\frac{1}{2}}\\
    R &= \frac{\log(\frac{1}{\delta})}{\log (4.80N^\frac{1}{2}) - \log (2e^2 + 3.44N^\frac{1}{2})}
\end{align}
\end{lemma}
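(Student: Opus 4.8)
\textbf{Proof proposal for Lemma~\ref{lem:cellwise_rate_bounds}.}

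The plan is to use Lemma~\ref{lem:overall_tpr_fpr} with $|K| = 1$ and control the two failure modes separately, each with probability at most $\delta/2$, then union bound. Recall from Lemma~\ref{lem:overall_tpr_fpr} that with $|K| = 1$ the probability of missing the true nearest neighbor is $1 - \Pr[\mathrm{Report}\,x_{\mathrm{NN}}] \le 1 - p^R$, and the probability of reporting a fixed non-neighbor is at most $\big[q\frac{eN(B-1)}{B(N-1)} + p(1 - \frac{N(B-1)}{eB(N-1)})\big]^R$. First I would handle the false negative: requiring $1 - p^R \le \delta/2$ and using the inequality $(1-x)^R \ge 1 - Rx$, it suffices to take $p = 1 - \frac{\delta}{2R}$, which is exactly the stated value. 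The remaining work is to show that with $q = N^{-1/2}$, $B = 2\sqrt N$, and the stated $R$, the total false-positive mass over all $N-1$ non-neighbors is at most $\delta/2$.

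Next I would simplify the per-point false-positive bound. With $B = 2\sqrt N$ we have $\frac{N(B-1)}{B(N-1)} = \frac{N(2\sqrt N - 1)}{2\sqrt N(N-1)}$, which for $N \ge 150$ is bounded below and above by explicit constants times $N^{-1/2}$ factors; plugging in $q = N^{-1/2}$ and $p \le 1$, the bracketed base becomes $O(N^{-1/2})$ — concretely something like $\frac{c_1}{N^{1/2}}$ for an explicit constant, while the correction term $p(1 - (\ldots)^{-1})$ contributes the $2e^2$-type constant. This is where the specific numbers $4.80$ and $2e^2 + 3.44$ in the statement come from: $4.80 N^{1/2}$ is (up to the constants hidden in the $\binom{}{}$ bounds) the reciprocal of the per-point false-positive base, and $2e^2 + 3.44 N^{1/2}$ is the numerator after clearing denominators. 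I would carefully track the constants from the $\left(\frac{a}{b}\right)^b \le \binom{a}{b} \le \left(\frac{ea}{b}\right)^b$ bounds used in Lemma~\ref{lem:overall_tpr_fpr}, using $N \ge 150$ to absorb lower-order terms, to verify the base of the exponent is at most $\frac{2e^2 + 3.44 N^{1/2}}{4.80 N^{1/2}}$.

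Then I would impose the union-bound requirement: $N \cdot (\text{base})^R \le \delta/2$, i.e. $(\text{base})^R \le \frac{\delta}{2N}$. Taking logs, $R \log(\text{base}) \le \log\frac{\delta}{2N}$, and since the base is less than $1$ its log is negative, so $R \ge \frac{\log(2N/\delta)}{-\log(\text{base})} = \frac{\log(2N/\delta)}{\log(4.80 N^{1/2}) - \log(2e^2 + 3.44 N^{1/2})}$. The stated $R = \frac{\log(1/\delta)}{\log(4.80N^{1/2}) - \log(2e^2 + 3.44N^{1/2})}$ is slightly smaller than this, so I would use the footnote hypothesis — $\delta$ small enough that $R \ge 10\log N$ — to argue that the denominator is bounded away from zero (for $N \ge 150$, $\log(4.80\sqrt N) - \log(2e^2 + 3.44\sqrt N) = \log\frac{4.80\sqrt N}{2e^2 + 3.44\sqrt N}$ is a positive constant bounded below, since $\frac{4.80\sqrt N}{3.44\sqrt N} \to \frac{4.80}{3.44} > 1$), and that $\log(1/\delta)$ dominates $\log(2N)$ in the regime where $R \ge 10 \log N$, so the stated $R$ suffices for the union bound. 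Also I must double-check that this $R$ is consistent with $p = 1 - \frac{\delta}{2R} \in (0,1)$, which holds trivially for small $\delta$.

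The main obstacle I anticipate is purely the bookkeeping of constants: getting the numbers $4.80$ and $2e^2 + 3.44$ to come out exactly requires being careful about which direction each $\binom{}{}$ estimate goes, where the factor of $e$ lands, and how much slack the $N \ge 150$ assumption buys to clean up the $\frac{2\sqrt N - 1}{N-1}$-type ratios into clean $N^{-1/2}$ bounds. The structural argument (union bound over two failure events, $(1-x)^R \ge 1 - Rx$ for the false negative, geometric decay in $R$ for the false positive) is routine; the delicate part is verifying that the advertised $R$ — which is smaller than the naive $\frac{\log(2N/\delta)}{\ldots}$ — still works under the footnote's regime assumptions, which is where the $R \ge 10\log N$ condition is doing its job.
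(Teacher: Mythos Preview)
Your overall plan matches the paper's: split the failure into missing $x_{\mathrm{NN}}$ and reporting a non-neighbor, bound each by $\delta/2$, use Bernoulli to get $p=1-\delta/(2R)$ for the first, and for the second apply a union bound over $N$ points to the per-point rate from Lemma~\ref{lem:overall_tpr_fpr} with $|K|=1$ and $B=2\sqrt N$.

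However, the false-positive half has a gap. First, a slip: the bracketed base is not $O(N^{-1/2})$. With $B=2\sqrt N$ the ratio $\frac{N(B-1)}{B(N-1)}\to 1$, so the $p$-term $p\bigl(1-\frac{1}{e}\cdot\frac{N(B-1)}{B(N-1)}\bigr)\approx 1-1/e$ is a constant; the base is $\Theta(1)$, strictly below $1$. More importantly, the constants $4.80$ and $2e^2+3.44\sqrt N$ do \emph{not} arise as a direct bound on that base, so your plan to ``verify the base is at most $\frac{2e^2+3.44\sqrt N}{4.80\sqrt N}$'' will not go through as stated. In the paper's derivation one writes $N\cdot(\text{base})^R\le\delta/2$, takes $R$-th roots to put $(\delta/2N)^{1/R}$ on the right, and \emph{solves the resulting inequality for $q$}. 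The key move is that under $R\ge 10\log N$ one has $N^{-1/R}\ge e^{-1/10}$ and $2^{-1/R}\ge 2^{-1/50}$, so the $N$ from the union bound is absorbed into a numerical constant --- indeed $\frac{149}{150}\cdot 2^{49/50}\cdot e^{9/10}\approx 4.80$, while $2e-2\approx 3.44$ comes from the $p$-term's coefficient and the $2e^2$ from the $q$-term's denominator --- leaving only $\delta^{1/R}$. Setting $q=N^{-1/2}$ in that simplified constraint and solving for $R$ then yields the stated formula directly, with $\log(1/\delta)$ (not $\log(2N/\delta)$) in the numerator. Your alternative route --- bound the base first, obtain $R\ge\frac{\log(2N/\delta)}{-\log(\text{base})}$, then argue $\log(1/\delta)$ dominates $\log(2N)$ --- could in principle be made rigorous, but it will not reproduce these particular constants, and the final ``dominates'' step would need a quantitative comparison of $-\log(\text{base})$ against the stated denominator rather than an asymptotic remark.
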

\begin{proof}

For this proof we will use the index described in Algorithm 1, using $R$ columns of $B = 2\sqrt{N}$ distance-sensitive Bloom filter cells. We now will determine the requirements for $p$, $q$, and $R$ to achieve an overall failure rate of $\delta$ in Algorithm 2.

There are two ways that the querying process, Algorithm 2, can fail to solve the nearest neighbor problem. We may fail to return the nearest neighbor, but we may also return any point in $D$ that is not the nearest neighbor. Let $P$ be the probability that Algorithm 2 returns the nearest neighbor $x_{\text{NN}}$ to the query, and let $Q$ be the probability Algorithm 2 returns at least one other point in $D$. Then using the union bound, we solve the nearest neighbor problem if
\begin{equation}
    (1 - P) + Q < \delta
\end{equation}
To simplify our analysis, we constrain $(1 - P)$ and $Q$ to be less than $\frac{\delta}{2}$. 
\begin{equation}
    (1 - P) \le \frac{\delta}{2} \qquad Q < \frac{\delta}{2} \label{eq:half_bound}
\end{equation}
\textbf{Analysis of $1 - P$:} Since $|K| = 1$ for the nearest neighbor problem, the true positive rate from Theorem~\ref{thm:distance_sensitive_Bloom} is equal to $P$, so that $P \ge p^R$. Thus $1 - P \le 1 - p^R$, so we will achieve the necessary bound on $P$ in Equation~\ref{eq:half_bound} if
\begin{equation}
    1 - p^R \le \frac{\delta}{2}
\end{equation}
Rearranging the inequality in terms of $1-p$, we have a constraint on the values of $p$ which solve the nearest neighbor problem.
\begin{equation}
    1 - p \le 1 - \left(1 - \frac{\delta}{2}\right)^\frac{1}{R} \label{eq:p_bound_1}
\end{equation}
% This inequality represents the range of values for $1 - p$ that solve the nearest neighbor problem.
If $p$ satisfies the inequality, then $1 - P < \frac{\delta}{2}$). Thus, we may \textit{reduce} the right hand side of the inequality, and the resulting values of $p$ will \textit{still} solve the nearest neighbor problem.

We now tighten the constraint in Equation~\eqref{eq:p_bound_1}. We use the Generalized Bernoulli's inequality, which states that
\begin{equation}
    (1 - x)^r \le 1 - rx \qquad \text{for $r \in [0, 1]$} \label{eq:bernoulli}
\end{equation}
Since $R \ge 1$, $\frac{1}{R} \in [0, 1]$, so we can apply this to the right side of Equation~\eqref{eq:p_bound_1}:
\begin{equation}
    1 - \left(1 - \frac{\delta}{2}\right)^\frac{1}{R} \ge \frac{\delta}{2R} 
\end{equation}
This gives us our new, more restrictive constraint for $1 - p$:
\begin{equation}
    1 - p \le \frac{\delta}{2R} \label{eq:p_final}
\end{equation}
\textbf{Analysis of $Q$:} From Theorem~\ref{thm:distance_sensitive_Bloom}, we have an upper bound on the probability that a single point is falsely reported, $\mathrm{Pr}[\mathrm{Report}\, x | x \not \in K]$. Using the union bound, we get that $Q$, the probability that \textit{any} of the $N$ points is falsely reported, is less than or equal to $N$ times the probability that a single point is falsely reported:
\begin{equation}
    Q \le N * \mathrm{Pr}[\mathrm{Report}\, x | x \not \in K]^N \label{eq:qu_bound} 
\end{equation}
Thus, we can achieve the requirement from Equation~\eqref{eq:half_bound} that $Q \le \frac{\delta}{2}$ by requiring that
\begin{equation}
    N * \mathrm{Pr}[\mathrm{Report}\, x | x \not \in K]^N < \frac{\delta}{2}
\end{equation}
If we now substitute in our expression for $\mathrm{Pr}[\mathrm{Report}\, x | x \not \in K]$ with $|K| = 1$ and $B = 2\sqrt{N}$ from Theorem~\ref{thm:distance_sensitive_Bloom} and (extensively) simplify, we have a constraint for $q$.
\begin{align}
\begin{split}
    q < \frac{2eN^\frac{1}{2}}{e^2(2N^\frac{1}{2} - 1)}\Bigg[&\frac{N - 1}{N} \left(\frac{\delta}{2N}\right)^\frac{1}{R}\\ - &\frac{p[(2e\frac{N - 1}{N} - 2)N^\frac{1}{2} + 1]}{2eN^\frac{1}{2}}\Bigg]
    \label{eq:complicated_constraint_q}
\end{split}
\end{align}
Like we did above for $1-p$, we can now tighten this constraint for $q$ to obtain a simpler expression. The simplified constraint leads to a smaller range of values for $q$, but these values still satisfy the original constraint and guarantee a total error rate of $\delta$. We decrease the constraint by replacing the factor of $-p$ with $-1$, by replacing the factor of $-\frac{N - 1}{N}$ with $-1$, and by replacing the factor of $\frac{e(2N^\frac{1}{2})}{e(2N^\frac{1}{2} - 1)}$ with $1$. We end up with the following (tighter) constraint for $q$:
\begin{equation}
    q < \frac{N - 1}{Ne}\left(\frac{\delta}{2N}\right)^\frac{1}{R} - \frac{(2e - 2)N^\frac{1}{2} + 1}{2e^2N^\frac{1}{2}} \label{eq:q_bound_1}
\end{equation}
Breaking up the $\frac{\delta}{2N}$ term and simplifying, we get
\begin{equation}
    q < \left(\frac{N - 1}{N}\right) \frac{2^{1 - \frac{1}{R}}N^{\frac{1}{2} - \frac{1}{R}}\delta^\frac{1}{R}}{2eN^\frac{1}{2}} - \frac{(2e - 2)N^\frac{1}{2} + 1}{2e^2N^\frac{1}{2}}
\end{equation}
Note that as $R$ increases, the right hand side of the constraint for $q$ also increases. For some small values of $R$, the right hand side is actually negative. A negative expression means that we have shrunk the range of allowable $q$ values so much that our simplified constraint is no longer meaningful. This does not mean that it is impossible to find $q$ to satisfy the original constraint, it simply means that our simplifications were too aggressive. 

We now make two key assumptions that allow us to show that the right hand side of Equation~\eqref{eq:q_bound_1} is always positive and well defined:
\begin{equation}
    N \ge 150 \qquad R \ge 10\log N > 50
\end{equation}
The analysis is possible without these assumptions, but must be done with the complicated expression in Equation~\ref{eq:complicated_constraint_q} rather than the simple version.

We continue to tighten the constraint by replacing some values of $N$ and $R$ with their smallest possible values (i.e. $N = 150$ and $R = 50$), in cases where making such a replacement will only make the right hand side of the constraint smaller\footnote{We make the substitution whenever the replaced term monotonically increases with increasing $N$ and $R$}:
\begin{equation}
q < \frac{149}{150}  \frac{2^{\frac{49}{50}}N^{\frac{1}{2} - \frac{1}{10\log N}} \delta^\frac{1}{R}}{2eN^\frac{1}{2}} - \frac{(2e - 2)N^\frac{1}{2} + 1}{2e^2N^\frac{1}{2}}
\end{equation}
We also simplify the term  $N^{\frac{1}{2} - \frac{1}{10\log N}}$:
\begin{align*}
    N^{\frac{1}{2} - \frac{1}{10\log N}}
    &= N^\frac{1}{2}  N^\frac{-1}{10\log N}\\
    &= N^\frac{1}{2}  (N^\frac{1}{\log N})^\frac{-1}{10}\\
    &= N^\frac{1}{2}e^\frac{-1}{10} &\text{since $x^\frac{1}{\log x} = e$}
\end{align*}
Plugging this value back into the constraint, we have
\begin{equation}
q <  \frac{\frac{149}{150}2^{\frac{49}{50}}N^{\frac{1}{2}}e^\frac{-1}{10} \delta^\frac{1}{R}}{2eN^\frac{1}{2}} - \frac{(2e - 2)N^\frac{1}{2} + 1}{2e^2N^\frac{1}{2}}
\end{equation}
To obtain our final constraint for $q$, we first combine fractions by multiplying the top and bottom of the left fraction by $e$:
\begin{equation}
q < \frac{\frac{149}{150}2^{\frac{49}{50}}N^{\frac{1}{2}}e^\frac{9}{10} \delta^\frac{1}{R} - (2e - 2)N^\frac{1}{2} - 1}{2e^2N^\frac{1}{2}}
\end{equation}
We then explicitly calculate the constants in the numerators, and slightly tighten the constraint by rounding the constants up/down appropriately. Since we are tightening the constraint, we can also replace the "less than" with a "less than or equal to." We finally get a simple constraint for $q$, such that any $q$ that satisfies the below inequality will solve the $\delta$ nearest neighbor problem:
\begin{equation}
    q \le \frac{4.80N^{\frac{1}{2}}\delta^\frac{1}{R} - 3.44N^\frac{1}{2} - 1}{2e^2N^\frac{1}{2}} \label{eq:q_final}
\end{equation}
Notice that this bound for $q$ is positive when $R$ is sufficiently large, since $\delta^\frac{1}{R}$ approaches $1$ and the numerator approaches the positive value $1.36N^\frac{1}{2} - 1$. 

\textbf{Solving for $p$, $q$, and $R$:} 
We now fix $p$ and $q$ to be the largest (and thus, least expensive) values that obey their respective simplified constraints. Using the edge of the $1 - p$ constraint range from Equation~\eqref{eq:p_final} and the edge of the $q$ constraint range from Equation~\eqref{eq:q_final}, we set
\begin{align}
    p &= 1 - \frac{\delta}{2R} \label{eq:p_explicit}\\
    q &= \frac{4.80N^{\frac{1}{2}}\delta^\frac{1}{R} - 3.44N^\frac{1}{2} - 1}{2e^2N^\frac{1}{2}} \label{eq:q_explicit_1}
\end{align}
Notice that -- although our analysis is performed under the assumption that $R > 10\log N$ -- we may still choose a value for the free parameter $R$. Our strategy is to select a value of $q$ which satisfies the constraint, and then solve for an $R$ which guarantees this value of $q$. To simplify our later analysis, we will use 
\begin{align}
    q = \frac{1}{\sqrt{N}} \label{eq:q_explicit_2}
\end{align}
We can now plug in Equation~\eqref{eq:q_explicit_2} into Equation~\eqref{eq:q_explicit_1} and solve for $R$:
\begin{equation}
    R = \frac{\log(\frac{1}{\delta})}{(\log (4.80N^\frac{1}{2}) - \log (2e^2 + 3.44N^\frac{1}{2}))} \label{eq:r_explicit}
\end{equation}
Note the right side is a valid fraction less than $1$ because $N >= 150$. In fact, the denominator is fixed within a small range: the largest it can be is when $N = 150$, when it is about $0.97$, and the smallest it can be is about $0.72$, when $N\to\infty$. Thus the denominator is $O(1)$. 
    
When $\delta$ is small enough that $R > 10\log N$, then Equation~\eqref{eq:r_explicit} yields a value of $R$ which satisfies our simplifying assumption and attains the correct value of $q$. 

We have now found explicit values of $p$, $q$ and $R$, which constrain all of the free parameters of our data structure. These values are given by Equation~\eqref{eq:p_explicit}, Equation~\eqref{eq:q_explicit_2}, and Equation~\eqref{eq:r_explicit}. We have shown that these values attain a sufficiently low false positive rate $Q$ and high true positive rate $P$ to solve the nearest neighbor problem, proving the theorem. 
\end{proof}

We have specific parameter settings from Lemma~\ref{lem:cellwise_rate_bounds} that solve the nearest neighbor problem, but it remains to bound the query time of the resulting data structure. We obtain our main theorem by using these values with the query time expresssion from Lemma~\ref{lem:group_distance_sensitive_query_time}.
\begin{thm} \textbf{(Main Theorem)}
\label{thm:final_query_time}
Under the conditions in Lemma \ref{lem:cellwise_rate_bounds}, we solve the randomized nearest neighbor problem for $\gamma$-stable queries in time $t_{query}$ with probability $1 - \delta$.
\begin{align}
t_{query} = O\left(N^{\frac{1}{2} + \gamma}\log^4(N)\log^3\left(\frac{1}{\delta}\right)\right)
\end{align}
\end{thm}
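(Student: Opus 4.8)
The plan is to prove the Main Theorem by direct substitution: take the query-time bound of Lemma~\ref{lem:group_distance_sensitive_query_time} and plug in the specific parameter values established in Lemma~\ref{lem:cellwise_rate_bounds}, then simplify. Since we solve the randomized nearest neighbor problem we have $|K| = 1$, and Lemma~\ref{lem:cellwise_rate_bounds} prescribes $q = N^{-1/2}$, $p = 1 - \frac{\delta}{2R}$, and the stated value of $R$. The first things I would record are the elementary identities $-\log q = \tfrac{1}{2}\log N$ and $q N^{1/2} = 1$, so that the factor $|K| + qN^{1/2}$ appearing in the second summand of Lemma~\ref{lem:group_distance_sensitive_query_time} collapses to the constant $2$.

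The one slightly delicate ingredient is controlling $R$ and the term $-\log(1-p) = \log(2R/\delta)$. From the formula for $R$ in Lemma~\ref{lem:cellwise_rate_bounds}, the denominator $\log(4.80\,N^{1/2}) - \log(2e^2 + 3.44\,N^{1/2}) = \log\!\big(\tfrac{4.80 N^{1/2}}{2e^2 + 3.44 N^{1/2}}\big)$ is a strictly positive constant that stays bounded away from $0$ and from $\infty$ for all $N \ge 150$; this bounded-denominator fact is precisely what is argued in the proof of Lemma~\ref{lem:cellwise_rate_bounds}. Hence $R = \Theta(\log(1/\delta))$, which gives $\log(2R/\delta) = O(\log R + \log(1/\delta)) = O(\log(1/\delta))$, and therefore
\[
\max\big(-\log q,\ -\log(1-p)\big) = O(\log N + \log(1/\delta)).
\]

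With these in hand the remainder is bookkeeping. Substituting into the first summand of Lemma~\ref{lem:group_distance_sensitive_query_time} yields
\[
R\,N^{1/2+\gamma}\log N \cdot O(\log N + \log(1/\delta)) = O\!\big(N^{1/2+\gamma}\log^2 N\,\log^2(1/\delta)\big),
\]
and substituting into the second summand yields $R\,N^{1/2}\log^2 N \cdot O(1)\cdot\log(1/\delta) = O(N^{1/2}\log^2 N\,\log^2(1/\delta))$, which is dominated by the first. Bounding this (loosely) by $O(N^{1/2+\gamma}\log^4 N \log^3(1/\delta))$ — valid whenever $\log N \ge 1$ and $\log(1/\delta) \ge 1$ — gives exactly the stated form. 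To make the correctness guarantee of Lemma~\ref{lem:cellwise_rate_bounds} and the high-probability runtime bound of Lemma~\ref{lem:group_distance_sensitive_query_time} hold simultaneously with probability $1-\delta$, I would run each at failure level $\delta/2$; since $R$ and every $\log(1/\delta)$ factor change only by constants, the asymptotic bound is unaffected.

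I do not anticipate a serious obstacle: once Lemmas~\ref{lem:group_distance_sensitive_query_time} and~\ref{lem:cellwise_rate_bounds} are available, the argument is essentially arithmetic. The only step needing care is the claim $R = \Theta(\log(1/\delta))$ — one must be sure the denominator in the formula for $R$ is bounded on both sides, which is why the standing hypothesis $N \ge 150$ (and the implied lower bound on $R$) is present — and being deliberate about which summand dominates after substitution. The gap between the sharper estimate $O(N^{1/2+\gamma}\log^2 N\,\log^2(1/\delta))$ and the advertised $O(N^{1/2+\gamma}\log^4 N\,\log^3(1/\delta))$ is deliberate slack that simplifies the statement at no real cost.
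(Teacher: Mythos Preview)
Your approach is correct and is essentially the paper's Case~2 argument: direct substitution of the values $q=N^{-1/2}$, $1-p=\delta/(2R)$, $|K|=1$, and $R=\Theta(\log(1/\delta))$ into Lemma~\ref{lem:group_distance_sensitive_query_time}, yielding $O(N^{1/2+\gamma}\log^2 N\,\log^2(1/\delta))$ before loosening to the advertised bound. Your handling of the two failure events by halving $\delta$ matches the paper's absorption of its auxiliary $\delta_3$.

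The one structural difference is that the paper performs a two-case split on $\delta$. It defines $\delta'$ as the threshold at which $R=10\log N$, and for $\delta\ge\delta'$ it solves the harder $\delta'$-problem instead; in that regime $R=10\log N$ and $\log(1/\delta')=O(\log N)$, which is where the $\log^4 N$ factor genuinely arises rather than being inserted as deliberate slack. For $\delta<\delta'$ the paper does exactly what you do. Your route is simpler and fully valid under the stated hypotheses of Lemma~\ref{lem:cellwise_rate_bounds} (which already require $\delta$ small enough that $R\ge 10\log N$, i.e.\ $\delta\le\delta'$); the paper's case split buys coverage of arbitrary $\delta$ at the cost of a slightly longer argument, and explains why the final statement carries $\log^4 N$ rather than the sharper $\log^2 N$ you obtained.
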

\begin{proof}
We made many simplifications to the constraints for $p$ and $q$ in Lemma~\ref{lem:cellwise_rate_bounds}. These simplifications require us to solve a \textit{harder} version of the problem than necessary. For example, there is some maximum value of $\delta$ in Lemma~\ref{lem:cellwise_rate_bounds} that suffices to make $R \ge 10\log N$. Call this $\delta'$, such that when $\delta = \delta'$ the value of $R$ from Lemma~\ref{lem:cellwise_rate_bounds} is $10\log N$. To solve the nearest neighbor problem for an arbitrary $\delta$, we split our analysis into two cases, $\delta \ge \delta'$ and $\delta < \delta'$, and solve for the query time under each one.

\textbf{Case 1, $\delta \ge \delta'$:} If $\delta \ge \delta'$, we use the values from Lemma~\ref{lem:cellwise_rate_bounds} with $\delta = \delta'$. This will return an array of tests that solves the $\delta'$ nearest neighbor problem. Because we substantially simplified the constraints in Lemma~\ref{lem:cellwise_rate_bounds}, $\delta' < \delta$ and thus we solve a harder version of the problem than necessary. When $\delta = \delta'$, we have the following from Lemma~\ref{lem:cellwise_rate_bounds}:
\begin{align*}
    R &= 10\log N\\
    q &= N^{-\frac{1}{2}}\\
    1 - p &= \frac{\delta'}{2R}\\
    |K| &= 1
\end{align*}
We now plug these values into our query time result from Lemma~\ref{lem:group_distance_sensitive_query_time} and simplify. Note that Lemma~\ref{lem:group_distance_sensitive_query_time} has its \textit{own} failure probability $\delta_3$. Note that to have $R = 10\log N$, $\frac{1}{\delta'}$ is slightly smaller than $N^{10}$, so $\log(\frac{1}{\delta'}) = O(\log N)$. This leaves us with
\begin{equation}
    t_{query} = O(N^{\frac{1}{2}+\gamma}\log^4(N)\log (1 / \delta_3) ) \label{eq:case_1_result}
\end{equation}
\textbf{Case $2$, $\delta < \delta'$:} If $\delta < \delta'$, then our simplifying changes to the constraints no longer force us to solve a harder problem than necessary. In this case, we use the values from Lemma~\ref{lem:cellwise_rate_bounds} using $\delta$. As before, we obtain our query time result from Lemma~\ref{lem:group_distance_sensitive_query_time}:
\begin{align}
\begin{split}
     t_{query} = O&\bigg(\log\left(\frac{1}{\delta}\right)N^{\frac{1}{2} + \gamma}\log^2(N)\\&\max \left(\log(N^\frac{1}{2}), \log\left(\frac{R}{\delta}\right)\right)\log (1 / \delta_3) \bigg) \label{eq:i_give_up_on_naming_these}
\end{split}
\end{align}
Note we used the fact that $R = O(\log(\frac{1}{\delta}))$, since as we noted in the proof of Lemma~\ref{lem:cellwise_rate_bounds} the denominator in the equation for $R$ is $O(1)$. We can further simplify Equation~\eqref{eq:i_give_up_on_naming_these} by rewriting $\log\left(\frac{R}{\delta}\right)$ as $\log(R) + \log\left(\frac{1}{\delta}\right)$, and recognizing that $\log(R)$ is dominated by $\log\left(\frac{1}{\delta}\right)$. Furthermore, since both terms in the $\max$ operation are greater than $1$, we note that the maximum is asymptotically smaller than the product of the two terms. After simplifying, we have that
\begin{equation}
    E[t_{query}] = O\left(\log^2\left(\frac{1}{\delta}\right)N^{\frac{1}{2} + \gamma}\log^2(N)\right)
    \label{eq:case_2_result}
\end{equation}
\textbf{Combining Results:}
In total the runtime for an arbitrary $\delta$ is the maximum of case $1$ (Equation~\eqref{eq:case_1_result}) and case $2$ (Equation~\eqref{eq:case_2_result}):
\begin{equation}
    t_{query} = O\left(N^{\frac{1}{2} + \gamma}\log^4(N)\log^2\left(\frac{1}{\delta}\right)\right)
\end{equation}
Here, we have absorbed $\delta_3$ into $\delta$, which adds only a constant multiplier to the expression.
\end{proof}

\section{Experiments}
In this section, we provide additional details about our experiments. We also show a full table of index characteristics for each dataset in our evaluation.

\begin{figure*}[t]
\begin{center}
\centerline{\includegraphics[width=\textwidth]{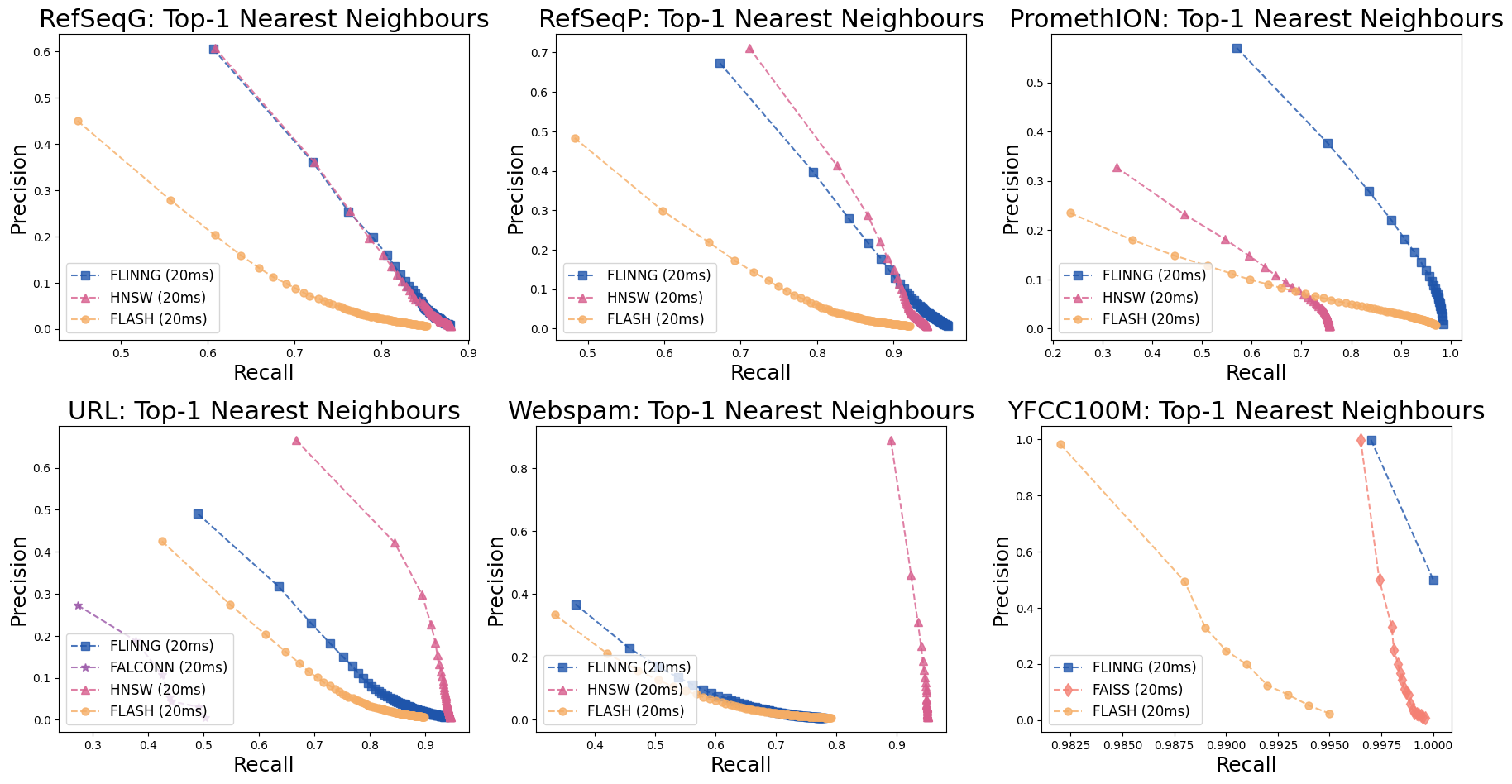}}
\caption{Precision recall graphs for the top $1$ nearest neighbours for each dataset that we tested on.}
\label{fig:precall}
\end{center}
\vspace{-1cm}
\end{figure*}

\begin{figure*}[t]
\vspace{-0.4cm}
\begin{center}
\centerline{\includegraphics[width=\textwidth]{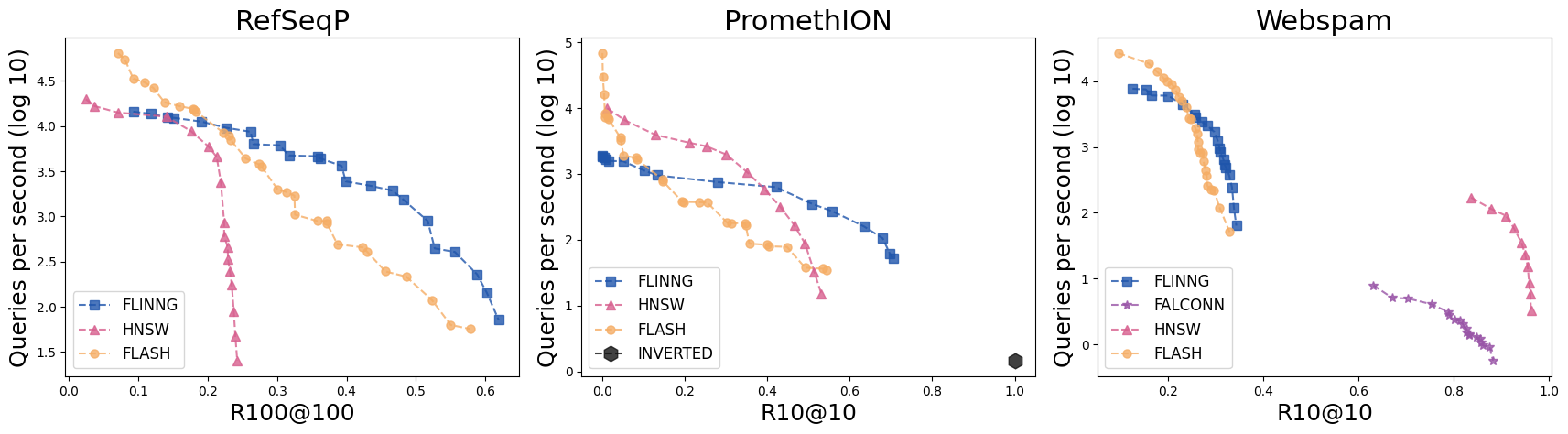}}
\caption{Plots for top-$10$ and top-$100$ nearest neighbor search on selected datasets. FLINNG performs best when the top neighbors are very similar to the query, as predicted by the theory.}
\label{fig:top100}
\end{center}
\vspace{-1cm}
\end{figure*}

\begin{table*}
\caption{Algorithm index sizes in gigabytes, determined by the minimum index size an algorithm achieved over all hyperparameters, subject to the restriction in the restriction column.}
\label{sizes}
\label{times}
\begin{center}
\small\addtolength{\tabcolsep}{-1pt}
\begin{sc}
\begin{tabular}{lcccccccr}
\toprule
Dataset & Restriction & FLINNG & FAISS & FALCONN & HNSW & FLASH & INVERTED & GROUPS\\
\midrule
RefSeqG &  $R1@100 > 0.8$ & $0.090$ & -& -& $0.031$& $4.29$& - & -\\
RefSeqP & $R1@100 > 0.8$ & $0.090$ & - & -& $0.031$& $4.29$& - & -\\
PromethION & $R1@100 > 0.8$ & $0.074$ & - & -& $0.64$ &$8.59$ & $4.22$ & -  \\
URL & $R1@100 > 0.5$ & $0.048$ & - & -& $3.03$& $4.29$& $2.21$ & -\\
Webspam & $R1@100 > 0.5$ & $0.0068$ & - & $2.25$ & $8.04$ & $4.29$ & - & $0.14$\\
YFCC100M & $R1@1 > 0.95$ & $3.5$ & $3.7$ & -& -& $4.29$ & - & -\\
\bottomrule
\end{tabular}
\end{sc}
\end{center}
\vskip -0.1in
\end{table*}

\subsection{System Details:} We performed all experiments using $1.48$ TB of RAM. For YFCC100M, we used 88 Intel Xeon E5-2699A v4 processors, each of which has a $56$ MB L3 cache. For YFCC100M, we used 96 Intel Xeon Gold 5220R processors with a 36 MB cache. 

\subsection{Baseline Failures}
Some of the baseline methods did not run on some of the datasets. We attempted to construct HNSW and FALCONN indices on YFCC100M, but memory limitations meant that the index would not fit within our 1.48 TB of RAM. We modified the HNSW library to work with nonstandard short floating point vectors, but the resulting index took more than 5 days to construct. We also tried to build FAISS indices on the genomics and web datasets, but were unable to fit the quantized data in memory because these problems are ultra high-dimensional.

\subsection{Hyperparameters}
% Copy-paste (ish) from the main text - it's commented out somewhere
For algorithms with fast indexing times like FLINNG, FLASH, FALCONN, and the grouping algorithm from~\cite{shi2014group}, we tried hundreds of hyperparameter settings and selected the best indices. For algorithms such as HNSW and FAISS, which can take hours or days to train, we built indices using suggested parameters and tuned the query-time arguments aggressively. 

FLINNG requires four hyperparameters: $R$, $B$, $m$, and $L$. We use $R = \{2,3,4\}$ and $B = 2^a$ for $a \in [11,15]$. To have 16-bit cell IDs, we constrain $BR < 2^{16}$. For YFCC100M, we use $R = 2$ and $a \in [12,19]$. We set the number of LSH functions $m$ to $2^a$ for $a \in [2,11]$. We used $L = 18$ for Webspam, $L = 12$ for YFCC100M and $L = 17$ for the other datasets.

% For FLINNG, we deviated somewhat from our theory by using larger $B$ and smaller $R$. For the non YFCC100M datasets, we tried $R = 2, 3, 4$ and $B = 2^a$ for $a \in [11, 15]$, and also restricted $B$ such that $BR < 2^{16}$ (to ensure that the cell ids could fit in $16$ bits). For YFCC100M, we fixed $R = 2$, $B = 2^a$ for $a \in [12, 19]$. For all datasets we let $m = 2^a$ for $a \in [2, 11]$. We used $18$ hash bits for webspam, $12$ hash bits for YFCC100M, and $17$ hash bits for every other dataset.

FLASH requires the following hyperparameters: $m$ (the number of hash tables), $L$ (the number of hash functions in each table), and $s$ (the size of each reservoir for reservoir sampling). We used the recommendations from the paper. However, we found that much larger values of $m$ and $s$ were needed than in the original paper to obtain high recall on some of our tasks. We used the same values of $m$ as the authors of FLASH: $m = 2^a$ for $a \in [2, 11]$ for all datasets. For the non YFCC100M datasets, we let $s = 2^a$ for $a \in [2, 11]$. For YFCC100M, we let $s = 2^a$ for $a \in [3, 12]$. As with FLINNG, we used $18$ hash bits for webspam, $12$ hash bits for YFCC100M, and $17$ hash bits for every other dataset.

Our implementation of the grouping algorithm from \cite{shi2014group}, which we denote GROUPS, requires $M$, the number of groups, $t$, the number of back propagation steps, $N_L$, the number of groups each point is in, and $R$, the total number of points to checks across all $t$ back propagation steps (see the original paper for more details on each parameter). None of our datasets in dense format fit in memory, which GROUPS requires, but we were able to project the URL and Webspam datasets into $400$ dimensions using $400$ random projections to get a meaningful benchmark against our algorithm (we cannot apply random projections to the other datasets so we were not able to run GROUPS on the other datasets). For Webspam, we tried all combinations of $M = 20000, 40000$, $t = 1, 2, 4$, $N_L = 2, 4$, and $R / t = 10000, 80000$. For URL, we tried all combinations of $M = 20000, 80000, 320000$, $t = 1, 2, 4$, $N_L = 2, 4$, and $R / t = 10000, 20000, 80000$.

FALCONN could only run on the URL and Webspam datasets because the package does not natively support Jaccard similarity for the genome datasets and has out-of-memory issues for YFCC100M. FALCONN requires three hyperparameters: $m$ (the number of hash tables), $n_p$ (the number of probes for multi-probe LSH), and $L$ (the number of hash functions for each table). We let $n_p = 2^a$ for $a \in [1, 9]$, $m = 2^a$ for $a \in [1, 4]$, and used $22, 20, 18$ hash bits for URL and $20, 18, 16$ hash bits for Webspam. 

Due to the high dimensionality of the other datasets, FAISS was only feasible for YFCC100M. For high-dimensional sparse data such as Webspam or URL, quantization actually \textit{increases} the memory of the index. We used an inverted file index with product quantization. This index requires two construction parameters: $m$ (the number of $k$-means centroids) and $s$ (the number of bits for product quantization). There is one query-time parameter $n_p$ (the number of clusters probed for each query). FAISS also supports the use of an HNSW graph to identify the best clusters, so we use both flat (i.e. brute force) and HNSW pre-indexing. We trained $3$ different indices, all with $s = 32$ bit product quantization: $m = 4$k centroids with flat (brute force) indexing, $m = 262$k centroids with flat indexing, and $65$k with HNSW indexing. We used a subset of one million points to train the $k$-means centroids. We used $n_p = 2^a$ for $a \in [1, 9]$. 

HNSW requires two construction hyperparameters: $ef_c$ and $M$. The parameter $M$ is the maximum number of edges for each node in the graph, while $ef_c$ may be thought of as a parameter that controls the quality of the near neighbor graph (larger is better). We used parameter $M = 32$ and $ef_c = 100$ for all trials. HNSW has one query-time parameter $ef_s$, which controls the recall-time tradeoff. We let $ef_s = 2^a$ for $a \in [4, 14]$. For our genomics datasets (PromethION and RefSeq), we used a pregenerated and fixed number of minhashes to allow HNSW to perform fast search in the Jaccard metric space. We used $2^a$ for $a \in [1, 11]$ number of hashes. We modified the HNSW code to support the approximate Jaccard metric by implementing a distance functin that counts the collisions among these hashes for two sequences. We tried to build an index for YFCC100M, but the graph construction algorithm did not finish,even after four days of construction time. 

Finally, we used inverted indices to compute the ground truth results for the URL and PromethION datasets. We show the query time for this structure as a baseline. The other genomic datasets (RefSeqG and RefSeqP) were too high-dimensional for an inverted index lookup to be practical, and Webspam and YFCC100M had too many nonzeros.

% We were able to run inverted indexes on Url and PromethION (RefSeqG and RefSeqP had too many dimensions, Webspam and YFCC100M were too dense).

\subsection{Supplementary Plots}

Figure~\ref{fig:precall} shows the precision recall plots for all datasets considered in our evaluation. Figure~\ref{fig:top100} shows the latency-recall relationship for top-10 and top-100 near neighbor search on selected datasets.

% Precision - recall plots for all the datasets (include PromethION as well)

\subsection{Index Characteristics}
Table~\ref{sizes} shows the index size for all of the indices considered in our evaluation.

\end{document}